\documentclass[12pt]{amsart}
\usepackage{amsmath, amsthm, amsfonts, amsbsy, amssymb, upref, enumerate, bigstrut,  color, mathtools, mathrsfs, float, bm}

\usepackage[left=1.3in,top=1.3in,right=1.3in]{geometry}

\usepackage{epsfig, graphicx}
\usepackage{ hyperref}

\newtheorem{theorem}{Theorem}[section]
\newtheorem{lemma}[theorem]{Lemma}

\newtheorem{proposition}[theorem]{Proposition}

\newtheorem{remark}[theorem]{Remark}

\newcommand{\ra}{\rightarrow}

\newcommand{\smallavg}[1]{\langle #1 \rangle}

\newcommand{\fpar}[2]{\frac{\partial #1}{\partial #2}}

\newcommand{\mpar}[3]{\frac{\partial^2 #1}{\partial #2 \partial #3}}

\newcommand{\nto}{\mbox{$\;\rightarrow_{\hspace*{-0.3cm}{\small n}}\;$~}}

%\makeindex

%\newcommand{\hm}{\hat{\mu}}

\newtheorem{thm1}{Theorem} 

\begin{document}
\title[Absence of Replica Symmetry Breaking in RFIM]
{{Absence of Replica Symmetry Breaking in Finite Fifth Moment Random Field Ising Model}}
\author{J. Roldan}
\address{
\newline
Departamento de Matem\'atica -
Universidade de Bras\'{i}lia, Brazil,
Email: \textup{\tt jinsupe10000@gmail.com}
}
%\thanks{Research partially supported by  CNPq}
%
\author{R. Vila$^\dag$}
\address{
\newline 
Departamento de Estat\'istica -
Universidade de Bras\'ilia, Brazil,
Email: \textup{\tt rovig161@gmail.com}
}
%\thanks{Research partially supported }

%\date{\today}

\keywords{Random Field Ising Model, Replica symmetry.
\\	
$^\dag$ \  Corresponding author.}
\subjclass[2010]{MSC 82B20, MSC 82B44, MSC 60K35.}

\begin{abstract}
This work is concerned with the theory of the Random Field Ising Model on the hypercubic lattice, in the presence of a
independent disorder with finite fifth moment.
We showed the absence of replica symmetry in any dimensions, 
at any temperature and field strength, almost surely. 
\end{abstract}

\maketitle
\section{Introduction}
\label{sec:1}
The Random Field Ising Model (RFIM) \cite{imry1975random,Larkin1970} is probably one of the simplest non-trivial models 
in Statistical Mechanics that belongs to a
class of disordered spin models in which the disorder, so-called quenched random magnetic field, is
coupled to the order parameter of the system. 
This model is under intensive investigation both 
experimentally \cite{PhysRevB.31.4538} and theoretically and until now a lot has
been studied on various aspects, especially the study of 
%its probabilistic behavior: 
the existence of phase transition. 
The earliest attempt to address the question of phase transition in the RFIM
goes back to Imry and Ma (1975) \cite{imry1975random}. 
They proposed an extension of the famous Peierls argument to study phase transition in this model.
Following their arguments for $d\leqslant 2$ the uniqueness of the Gibbs states would be expected, while for 
$d\geqslant 3$ this model should have phase transition. 
%This conjecture generated many debates in the theoretical physics community. 
%Some theoretical physicists predicted, by an argument of dimensional reduction 
%(due to Parisi and Sourlas \cite{parisi1979random}), that in $d=3$ we should not have phase transition,
%contrary to Imry-Ma conjecture. 
Remarkable progress on this problem in dimensions $d\geqslant 3$ 
was made by Imbrie \cite{imbrie1984lower,imbrie1985ground} and  
it was  finally solved by Bricmont and Kupiainen (1987-1988) \cite{bricmont1987lower,bricmont1988phase} 
using the renormalization group. 
Subsequently the case $d\leqslant 2$  was solved by Aizenman and Wehr  (1989-1990)
\cite{aizenman1989rounding,aizenman1990rounding}. 	
Recently, there are many studies about the properties of the RFIM, such as decay of correlations, phase transitions
(see, e.g., \cite{aizenman2018power,bricmont1988phase,camia2018note,chatterjee2018decay,Itoi2018})
and 
the replica symmetry breaking (see, e.g., 
\cite{chatterjee2015absence,mezard1992replica,parisi2002physical,talagrand2003spin}).

There are several methods to study disordered systems like the Sherrington-Kirkpatrick model 
\cite{nishimori2001statistical} and the RFIM - the cavity and replica methods  \cite{edwards1975theory} being among them. The replica method enables us to calculate the 
%configurational  
%average  of the logarithm of the partition function in simpler ways, 
disorder-averaged value of $\ln Z$, where $Z$ denotes the partition function,
in simpler ways, calculating the 
disorder average $Z^n$.
%$n$-th moment of the partition function.
%average $\mathbb{E}[Z^n]$. 
In other words, it uses
$n$ copies or replicas of the system 
(see \cite{spicastellani2005n,nishimori2001statistical}). 
This method works with $n$ as an integer number but later,
by using an analytic continuation to the real numbers, 
analysis was carried out for $n$ tending to $0$.
%\textcolor{red}{it uses $n$ in the real numbers and this analysis is carry out to $n$ tending to $0$}.
Using this method the solution is given in terms of its replicas and is known as the replica-symmetric solution. The direct use of the replica method can result in non-physical conclusions. To avoid that, the replica symmetry breaking scheme (see \cite{mezard1987spin,parisi1980magnetic}) is used.
In order to use this scheme we first need to know if the model has this property. Taking this into consideration, it is common for disorder systems like spin glasses to analyze the replica symmetry breaking scheme. 
There are also many contributions from theoretical physics viewpoint to the study of the RFIM, 
see for example \cite{mezard1992replica}.
Chatterjee \cite{chatterjee2015absence}, using the definition of Parisi \cite{parisi2002physical}, showed that  
this model does not have the replica symmetry breaking. 
Later, this same property was proven for 
the transverse and longitudinal RFIM in \cite{Itoi2018}.

Among the main contributions of this paper is the introduction of a class of
independent disorders with finite 5-th moment  where similar results of 
\cite{chatterjee2015absence} can be recovered by
a generalized Gaussian integration by parts 
and Guirlanda-Guerra identities 
(see \cite{aizenmancontucci98,ghirlanda1998general} for a literature review).
In other words,
we show that
the absence of replica symmetry breaking 
is still valid for the model proposed in this paper.
Most of the works in the RFIM  assumes Gaussianity of the disorders 
(see \cite{aizenman2018power,camia2018note,chatterjee2015absence,chatterjee2018decay}). 
In \cite{chatterjee2018decay}, at the end of its introduction, the author warns about 
the difficulty to adapt Gaussian methods and results for models in the presence of disorders with distribution that is not necessarily Gaussian. 
On the other hand, we emphasize that, combining
the results of this paper 
with the main theorem of Panchenko (2011) \cite{Panchenkoultra} we implicitly
established the Parisi ultrametricity property
(see \cite{Panchenkoultra,parisi1980sequence}) in our RFIM. 
%To the best of our knowledge, no previous work has studied 
%the ultrametricity in the non-Gaussian RFIM, so this work is the first  to establish this
%property.

	It is known that the presence of disorder in condensed matter systems give rises to 
	new phases and phase transitions possibly related with the multiplicity of metastable states. 
	A system is said to be in a spin glass phase if and only if the ferromagnetic susceptibility 
	is finite, while the spin glass susceptibility is infinite.	
	A long-standing debate on the presence or absence of (elusive) spin glass phase in several systems,
	such as
	the RFIM and the Ginzburg-Landau model (or the so-called $\phi^4$-theory),
	has been the subject of research for many years.
	In \cite{KFRZ} and  \cite{Krzakala2011},
	the authors stated that the RFIM and the Ginzburg-Landau model, respectively, with non-negative interactions and arbitrary disorder on an arbitrary lattice does not have a 
	spin glass phase. That is, they argued that on both ferromagnetic systems the spin glass
	susceptibility is always upper-bounded by the ferromagnetic susceptibility, consequently
	excluding the possibility of a spin glass phase.
	In 2015, Chatterjee \cite{chatterjee2015absence} gave a rigorous mathematical proof 
	supporting part of the findings claimed in \cite{KFRZ}, in the random (Gaussian) field Ising model on the hypercubic lattice.
	Here, we extended the Chatterjee's results to a large class of not necessarily Gaussian random fields, thus giving a further partial support in favor of the absence of a spin glass phase.

%  work is the first to establish 
%this property for the RFIM in the non-Gaussian environment.}

This paper is organized as follows. In Section \ref{sec:3}, we begin by presenting the RFIM  and setting up some basic definitions. 
Furthermore, in this section an extension of the main result of 
\cite{chatterjee2015absence} in a more general setting is stated.
In Section \ref{Outline of the proof}, an outline of the proof of this extension is given and
in Section \ref{proof} the proof itself is presented, in details
%, that is, 
%we show that there is no replica symmetry breaking 
(see Theorem \ref{rsbthm}). 
%In Section \ref{Applications}, we provide a class 
%of distributions that generates our 
%disorders. For example, we consider 
%the Birnbaum-Saunders (BS) distribution  \cite{birnbaum1969new},
%the generalized BS distribution \cite{diaz2006some},
%the Sinh-Normal distribution \cite{rieck1991log}, the Log-BS 
%distribution \cite{rieck1991log}, 
%the modified Wald  distribution \cite{chhikara1988inverse}
%and the bivariate BS distribution \cite{SLVLT2018}.
%We emphasize that these examples do not have some physical appeal, 
%but it allows to guarantee that the class of disorders 
%considered in this work is not trivial.
%
We end this paper with the proof of our main tool 
in Appendix (see Proposition \ref{prop-cv}).

%
%\newpage
%\vspace*{-0,2cm}
\section{The model}
\label{sec:3}	
Given $n\geqslant 1$, let 
$V_n=\mathbb{Z}^d\cap[1,n]^d$, $d\geqslant 1$, be a finite subset of vertices of 
$d$-dimensional hypercubic lattice 
%(that does not contain zero)
with cardinality denoted by $|V_n|$.
%For technical reasons 
%we assume that $n$ (large enough)  is  strictly greater than 
%$\exp\big(2\log(d)/(d-2)\big)$.
The (random) Gibbs measure of the RFIM on the set of spin configurations  
$\{-1,1\}^{V_n}$  is given by
\begin{align}\label{gibbs-measure2}
{G}_n(\{\sigma\})
=
{1\over Z_{n}}
\exp\big(
-H_n(\sigma)
\big)\,,
\end{align}
where $H_n$ is a Hamiltonian on $\{-1,1\}^{V_n}$ given by
\begin{align}\label{hamiltonian}
-H_n(\sigma)\coloneqq
\beta \sum_{\langle xy\rangle}\sigma_x \sigma_y + h\sum_{x} g_x \sigma_x\,.
\end{align}
Here, $\langle xy\rangle$ below the first sum means that we are summing over $x, y$ that are neighbors,
$\beta$ and $h$ are positive parameters, called inverse temperature and field strength, respectively.
%the coupling constant $J$ is positive (ferromagnetic case);
The partition function $Z_{n}=Z_{n}(\beta,h)$ enters the definition of 
${G}_n$ as a normalizing factor and the
$g_x$'s are
%are usually  independent identically distributed (i.i.d.) random variables 
independent 
%and {\it not identically distributed} 
random variables 
(that collectively are called the disorder)
%In this paper we consider 
%independent 
%and non-identically distributed disorders $(g_x)$ 
%with zero-mean and finite-variance (equal to $h_x^2$) 
of form
\begin{align}\label{disorder}
g_x
\coloneqq
h_x \zeta_x, \ \forall x\in\mathbb{Z}^d, \qquad
\sup_{x\in\mathbb{Z}^d}|h_x|\leqslant 1\,, 
%\delta_{h_x, \pm 1}+h_x \zeta_x \, (1-\delta_{h_x, 1}),
%\begin{cases}
%h_x\, \zeta_x, & \text{if} \ h_x\neq 1
%\\[0,2cm]
%\zeta_x\stackrel{d}{=} N(0,1), & \text{if} \ h_x= 1
%\end{cases},
%\quad 
%|h_x|\leqslant 1, \ \forall x
%\,,	 
\end{align}
where 
%$\delta$ is the Kronecker delta function. In
%the case $h_x\neq 1$ for all $x$, 
$(\zeta_x)$ is an arbitrary disorder with the following properties: the $\zeta_x$'s are  
independent identically distributed (i.i.d.) real-valued random variables with 
zero-mean and unit-variance such that
$\zeta_{x}^5$ is integrable.
%, and 
%$(\zeta_x)$ is an i.i.d. Gaussian disorder when $h_x=\pm 1$ for all $x$. 
Furthermore, the sequence
$(h_x)$ is  a non-zero 
%, uniformly bounded (with uniform bound equal to $1$)  and 
non-invariant 
%non-summable 
magnetic external field
such that
\begin{align}\label{condition-main}
%{1\over |V_n|}
\sum_{x\in V_n} |h_x|
=
o(|V_n|),
\quad 
\text{as} \ n\to\infty\,.
%O(|\partial V_n|)\,,
%\nto 0\,.
\end{align}
%where $|\partial V_n|$ denotes the number of pairs 
%$x, y$ that are neighbors such that  $x\in V_n$ and $y\in\mathbb{Z}^d\setminus V_n$.
%
%$h_x<1$ for all $x$.
%such that $|h_x|\leqslant 1$ for all $x$.
One can take, for example,
$h_x=h^*\,\|x\|^{-\alpha}$, for $x\neq 0$ and $h_0=h^*$,
with 
$\alpha>0$,
$h^*\in(0,1)$ fixed,
where $\|x-y\|$ denotes the distance between $x$ and $y$ on 
the hypercubic lattice. 
Indeed, 
if $\alpha> d$, $(h_x)$ is summable then \eqref{condition-main} follows trivially;
also when $\alpha\leqslant d$, the external field is not summable and in this case  the condition 
\eqref{condition-main} is obtained by counting over the sizes of $\|x\|$, i.e. doing $\|x\|=r$, with  $r=1,\,2, \,3,\, \cdots$, and next by using the Stolz-Ces\`aro Theorem.
%
%for $\alpha>1$, $\sum_{x\in V_n} |h_x|=O(|\partial V_n|)$
%for $n$ sufficiently large (see Lemma 2 in \cite{Bissacot2015}),
%where $|\partial V_n|$ denotes the number of pairs 
%$x, y$ that are neighbors such that  $x\in V_n$ and $y\in\mathbb{Z}^d\setminus V_n$,
%on the other hand, for $\alpha\leqslant 1$ the external field
%$(h_x)$ is not summable and the condition \eqref{condition-main} follows by Stolz-Ces\`aro Theorem.
%
The study of  the classical nearest neighbor ferromagnetic Ising model in the presence of  
positive power-law decay
external fields with power $\alpha$
appeared recently in several works, see, e.g., 
\cite{Bissacot2015,BISSACOT20174126,Cioletti2016}.
	\begin{remark}
		In the reference \cite{Auffinger}, Example 3, the authors considered the RFIM where 
		the field strength $h$ is a small perturbation (some sort of mean field model). 
		In our setting the field strength remains unchanged with respect to the volume. 
	\end{remark}
\begin{remark}
	Taking $h_x=\pm 1$ and $\zeta_x\sim N(0,1)$ for all $x$, in \eqref{disorder}, 
	the condition \eqref{condition-main} is not satisfied, but this one is not required  
	by the Gaussian integration by parts.
	Then, in this paper we  recovered the results of \cite{chatterjee2015absence} for the 
	RFIM 
	%under disorder of form \eqref{disorder}
	in a general environment.
\end{remark}
\begin{remark}
	%Notice that
	%the condition \eqref{condition-main} is strong. 
	The ideal would be to consider the condition: there exist constants $c_1,d_1>0$ such that 
	\begin{align}\label{new-cond}
	c_1 
	\leqslant
	\liminf_{n\to\infty} {1\over |V_n|} \sum_{x\in V_n} |h_x|
	\quad 
	\text{and}
	\quad 
	\limsup_{n\to\infty} {1\over |V_n|} \sum_{x\in V_n} |h_x|
	\leqslant
	d_1\,,
	\end{align}
	instead the condition \eqref{condition-main}. 
	Since the techniques used in this paper include 
	a generalized Gaussian integration by parts,
	%our model lacks Gaussianity we are forced to use non-Gaussian integration by parts. 
	we emphasize that unfortunately 
	%the techniques used in 
	%this one
	we can not weaken \eqref{condition-main} by condition \eqref{new-cond}.
	Note also that the magnetics fields $(h_x)$ considered here satisfy the inequality
	of the right side of \eqref{new-cond} with $d_1=1$, but not the condition of the left side.
\end{remark}

\subsection{Some definitions}
For a function $f:(\{-1,1\}^{V_n})^m\to\mathbb{R}$, $m\geqslant 1$, we define
\begin{align}\label{en-int}
\langle \,f\,\rangle
%=
%\langle f(\sigma^1,\ldots,\sigma^m)  \rangle
&\coloneqq
\int f(\sigma^1,\ldots,\sigma^m)  \, \text{d}{G}_n(\sigma^1)\cdots \text{d}{G}_n(\sigma^m)
\nonumber
\\[0,2cm]
&= 
\dfrac{1}{Z_n^m}
\sum_{\sigma^1,\ldots,\sigma^m}
f(\sigma^1,\ldots,\sigma^m) 
\exp\sum_{s=1}^{m}\biggl(
\beta \sum_{\langle xy\rangle}\sigma_x^s \sigma_y^s + h\sum_{x} g_x \sigma_x^s
\biggr)
\,.
\end{align}
Let $\langle\,\cdot\,\rangle_{g=u}$ be the Gibbs expectation defined by setting 
$g_x$ in $\langle\,\cdot\,\rangle$ to be $u_x$, for each $x\in V_n$.
The randomness of the $g_x$'s
will be represented by the non-Gaussian measure $\gamma$ on $\mathbb{R}^{n^d}$.
Following the notation of Talagrand \cite{talagrand2003spin}, we write 
\[
\nu(f)
\coloneqq 
\mathbb{E}\langle \,f\,\rangle
=
\int \langle \,f\,\rangle_{g=u} \, \text{d}\gamma(u)
\,,
\]
averaging 
%on
over 
%the non-Gaussian
disorder realizations.

If $\sigma^1,\sigma^2,\ldots$ are i.i.d.
configurations under Gibbs measure \eqref{gibbs-measure2}, known as replicas,
the \textit{generalized overlap} (with reference to \cite{chatterjee2009ghirlanda}) 
between two replicas $\sigma^l$, $\sigma^s$ is defined as
\begin{align}\label{overlap}
R_{l,s}= R_{l,s}(\sigma^l, \sigma^s) 
\coloneqq 
{1\over |V_n|}\sum_{x\in V_n} (\mathbb{E}g_x^2)^{1-\delta_{l,s}}\, \sigma^l_x \sigma^s_x, \quad \forall l,s\,,
%
%\delta_{l,s}+ {(1-\delta_{l,s})\over |V_n|}
%\sum_{x\in V_n} f_x(\sigma^l) f_x(\sigma^s), \quad \forall l,s\,,
\end{align}
where $\delta$ is the Kronecker delta function and $\mathbb{E}(g_x g_y)=\delta_{x,y}\,h_x h_y$.
%and $(f_x)$ is a collection of functions from $\{-1,1\}^{V_n}$ into $ [-1,1]$ defined by
%$f_x(\sigma)=\sqrt{\mathbb{E}g_x^2}\,\sigma_x=|h_x| \sigma_x,$ $\forall\sigma\in\{-1,1\}^{V_n}$.
Without loss of generality we are assuming that the deterministic constant
$R_{l,l}$ 
%is a deterministic constant in $[-1,1]$ 
is equal to $1$
for all $\sigma^l$.
Note that $|R_{l,s}|\leqslant 1$ (by Schwartz' inequality)
%, $R_{l,l}=1$ 
and that the
infinite random array $R = (R_{l,s})_{l,s\geqslant 1}$ is symmetric, 
non-negative definite, weakly
exchangeable (that is, $(R_{l,s})_{1\leqslant l,s\leqslant m}$ and
$(R_{\rho(l),\rho(s)})_{1\leqslant l,s\leqslant m}$ have the same distribution,
for any permutation $\rho:\{1,\ldots,m\}\to \{1,\ldots,m\}$ and for any $m\geqslant 1$).
The array $R$ is said
to satisfy the Ghirlanda-Guerra identities 
(see \cite{aizenmancontucci98,ghirlanda1998general}) if for any $m\geqslant2$ 
and any bounded measurable function $f=f\big((R_{l,s})_{1\leqslant l,s\leqslant m}\big)$,
\begin{align}\label{G-G}
\nu(f R_{1,m+1})
- 
\frac{1}{m}  \,
\nu(f)
\nu(R_{1,2})
- 
\frac{1}{m}
\sum_{s=2}^m \nu(f R_{1,s})
\nto 0\,, 
\end{align}
at almost all $(\beta,h)$.

For each $(\beta,h)\in (0,\infty)^2$, let 
\begin{align}\label{log}
F_{n} = F_{n}(\beta,h) \coloneqq \log Z_{n},
\ 
\psi_{n}= \psi_{n}(\beta, h) \coloneqq \dfrac{F_{n}}{|V_n|},
\ 
p_{n}=p_{n}(\beta, h)\coloneqq\mathbb{E}\psi_{n}\,,
\end{align}
where $\psi_{n}$ is proportional to the free energy
and $p_{n}$ is the value expected of $\psi_{n}$.
It is well-known that in the thermodynamic limit,
$
p = p(\beta,h) \coloneqq \lim_{n\ra\infty} p_{n}
$
is well defined (see Lemma 2.1 in \cite{chatterjee2015absence}),
$p$ is a convex function of $h$ for every fixed $\beta$ and the same is true for $F_{n}$, $\psi_{n}$ and $p_n$ (see Lemma 2.2 in \cite{chatterjee2015absence}).
Then, it is natural to introduce the set
\begin{align}\label{remark-main-1}
{\mathcal A} 
\coloneqq\big\{(\beta,h)\in(0,\infty)^2: 
\textstyle{\partial p\over \partial h^-}(\beta,h)\neq {\partial p\over \partial h^+}(\beta,h) 
\big\}\,.
\end{align}
It is known that, the set ${\mathcal A}$ has zero Lebesgue measure 
and moreover this set is countable (see Lemma 2.3 in \cite{chatterjee2015absence}).
%Moreover, the set 
%\begin{align}\label{remark-main-1}
%	{\mathcal A}\coloneqq \cap_{\beta>0} {\mathcal A}_\beta
%\end{align} 
%has Lebesgue measure zero 

\subsection{The main result}
The system is said to exhibit replica symmetry breaking (as stated in 
\cite{parisi2002physical})
if the limiting
distribution of the random variable $R_{1,2}$, when $n\to\infty$, denoted by $p(q)$ for each $q$,
has more than one point in its support.
In the present paper, 
the next main theorem shows that
this does not happen for the RFIM. 
That is, 
we will show that for any $(\beta, h)\not\in {\mathcal A}$
there exists a constant $q_{\beta,h}\in [-1,1]$ such that
$p(q)$  is a point distribution concentrated on $q_{\beta,h}$.
%, that is,
%$
%p(q)= \delta(q_{\beta,h}-q)
%$
%for each $q$.
A sufficient condition to prove this is to verify convergence in quadratic mean,
\begin{align}\label{condition-suf}
\nu\big((R_{1,2}-q_{\beta, h})^2\big)\nto 0\,.
\end{align}
\begin{thm1}[Lack of Replica Symmetry Breaking in the RFIM]\label{rsbthm}
	For any $(\beta, h)\not\in {\mathcal A}$  
	the infinite volume limit: $\lim_{n\to\infty}\nu(R_{1,2})$, exists, and the
	variance of the overlap $R_{1,2}$
	%	, with respect to measure $\nu$, 
	vanishes
	\[
	\nu\big( (R_{1,2}-\nu(R_{1,2}))^2 \big)\nto 0\,.
	\]
	That is, the overlap $R_{1,2}$ is self-averaging in the RFIM  defined by 
	\eqref{gibbs-measure2}.
	Furthermore, there exists a constant $q_{\beta,h}\in [-1,1]$ such that \eqref{condition-suf} is satisfied.
	%\[
	%\nu\big((R_{1,2}-q_{\beta, h})^2\big)\nto 0\,.
	%\]
\end{thm1}
%\begin{remark}
%	If $(\zeta_x)$ in  \eqref{disorder} is a Gaussian disorder, note that	
%	the inequality \eqref{condition-behaviour-variance} is not a problem
%	by the Poincar\'e inequality 
%	for the Gaussian measure \cite{ledoux2001concentration}. 
%	In other case, the Gaussian Poincar\'e inequality does not always work.
%	In Section \ref{Applications} we consider
%	a class of non-Gaussian disorders $(\zeta_x)$, where up to a non-linear transformation, 
%	one can use this inequality.
%	%and consequently the hypothesis \eqref{condition-behaviour-variance} is satisfied.
%	Then,
%	Theorem \ref{rsbthm} provides an 
%	extension of Theorem 1.1. of \cite{chatterjee2015absence} for the RFIM under non-Gaussian disorder
%	\eqref{disorder}.
%\end{remark}

The rest of this paper is devoted to the proof of Theorem \ref{rsbthm}.

\section{Outline of the proof}\label{Outline of the proof}
%{\color{red} Antes de colocar varios lemas, acho legal dar uma visao geral de como vai ser a 
%prova e pra que voce precisa desses lemas.}
The key results that structure the content of the main theorem of this paper are the following:
\begin{enumerate}
	\item[(I)] 
	{\it The FKG property of the RFIM:} 
	if $f$ and $g$ are two monotone increasing functions on the configuration space $\{-1,1\}^{V_n}$,
	then 
	$
	\smallavg{f g} 
	\geqslant 
	\smallavg{f}\smallavg{g}
	$.
	%	That is, $f$ and $g$ are positively correlated.
	The proof follows by verifying the FKG lattice condition \cite{fortuin1971correlation} 
	for any realization of the disorder.
	\item[(II)]
	{\it A generalized  Gaussian integration by parts}:
	if $f=f\big((R_{l,s})_{1\leqslant l,s\leqslant m}\big):\mathbb{R}^{m(m-1)/2}$ $\ra[-1,1]$ 
	is a bounded measurable function
	of the overlaps that not change with $n$, using Taylor's Theorem 
	(see Propositions \ref{prop-cheng} and \ref{prop-cv} in Appendix) and the condition
	\eqref{condition-main}, we show that
	\begin{align}
	& \textstyle
	\sum_x
	\mathbb{E}g_x \langle\,\sigma_x^1 \, f\,\rangle 
	-
	\sum_x
	\mathbb{E}g_x^2\, \mathbb{E}{\partial \langle\,\sigma_x^1 \, f\,\rangle\over 
		\partial g_x}
	=
	o(|V_n|)\,, \label{gul}
	\\[0,1cm]
	& \textstyle
	\sum_{x,y}
	\mathbb{E}g_x g_y (F_{n} - \mathbb{E}F_{n})
	-
	\sum_{x,y}
	\mathbb{E}g_x^2 \mathbb{E}g_y^2 \,
	\mathbb{E}\mpar{F_n}{g_x}{g_y}
	=
	o(|V_n|^2)\,, \label{eq-2}
	\\[0,1cm]
	& \textstyle
	\sum_{x,y}
	\mathbb{E}g_x g_y 
	\smallavg{\sigma_x;\sigma_y}
	-
	\sum_{x,y}
	\mathbb{E}g_x^2 \mathbb{E}g_y^2 \,
	\mathbb{E}\mpar{\smallavg{\sigma_x;\sigma_y}}{g_x}{g_y}
	=
	o(|V_n|^2)\,, \label{tcf}
	\end{align}
	in the limit $n\to\infty$, where $F_n$ is as in \eqref{log} and
	$\smallavg{\sigma_x;\sigma_y}
	\coloneqq \smallavg{\sigma_x\sigma_y}
	-
	\smallavg{\sigma_x}\smallavg{\sigma_y}$
	denotes the truncated two-point correlation for the Gibbs measure in finite volume defined by \eqref{gibbs-measure2}.
	\item[(III)]	
	$\mathbb{E}\left(\smallavg{R_{1,2}^2}
	- 
	\smallavg{R_{1,2}}^2 
	\right)=o(1)$  
	and 
	$\nu\big((m(\sigma)-\smallavg{m(\sigma)})^2\big)=o(1)$, 
	as $n\to\infty$ (see Lemma \ref{mainlmm-il}), 
	where $R_{1,2}$ is as in \eqref{overlap} and $m(\sigma)\coloneqq \sum_x\sigma_x/|V_n|$ 
	defines the magnetization: the proof o this item
	follows by using FKG inequality (Item (I)) and the generalized  Gaussian integration by parts 
	\eqref{eq-2}.
	\item[(IV)]
	$\nu\big(|\Delta_n - \nu(\Delta_n) | \big)=o(1)$,
	as $n\to\infty$, $\forall (\beta,h)\in {\mathcal A}^c$ (see Lemma \ref{lemma-main}),
	where $\Delta_n= \sum_{x\in V_n} g_x \sigma_x/|V_n|$:
	following the same steps as Lemma 2.7 in \cite{chatterjee2015absence}, we obtain that
	$\forall(\beta, h)\in {\mathcal A}^c$, 
	$
	\nu(\Delta_n)
	\nto
	\fpar{p}{h}(\beta,h)
	$
	and
	$
	\mathbb{E}|\smallavg{\Delta_n}-\nu(\Delta_n)| 
	\nto 0.
	$
	Combining these two limits with \eqref{tcf} the proof of this item follows.
	\item[(V)] {\it The Ghirlanda-Guerra identities} (see Lemma \ref{ggid}):
	for any bounded measurable function 
	$f:\mathbb{R}^{m(m-1)/2}\ra[-1,1]$, using \eqref{gul}, we show that
	\begin{align}\label{1-appx}
	\nu(\Delta_n(\sigma^1) f)
	-
	h\, \nu\biggl(\Big(\sum_{s=1}^{m}R_{1,s}-mR_{1,m+1}\Big)f\biggr)
	=
	O(|V_n|)\,.
	\end{align}
	Equivalently, the difference
	$
	\nu\big(\Delta_n(\sigma^1) f\big)
	- \nu(\Delta_n) \nu(f)
	$
	approximates
	\[
	h\,\nu\biggl(\Big(\nu(R_{1,2})+\sum_{s=2}^{m}R_{1,s}-mR_{1,m+1}\Big)f \biggr)\,,
	\]
	as $n\to\infty$.
	We emphasize that the last expression is exactly equal to zero under the Gaussian disorder by the 
	Gaussian integration by parts.
	Combining the last approximation with the limit 
	$\mathbb{E}|\smallavg{\Delta_n}-\nu(\Delta_n)| \nto 0$ given in (IV), we conclude that
	the Ghirlanda-Guerra identities \eqref{G-G} are satisfied.
\end{enumerate}
Finally, 
using properties of symmetry of overlaps $R_{l,s}$, Items (V) and (III), it follows that
%$\nu(R_{1,2}^2)-\big(\nu(R_{1,2})\big)^2\nto 0$, or equivalently that 
the variance of 
$R_{1,2}$, with respect to $\nu$, converges to zero as $n\to\infty$. Then it is enough to proof that
the limit of $\nu(R_{1,2})$, as $n\to\infty$, exists. Indeed, taking $f=m=1$ in \eqref{1-appx} and
using the limit 
$
\nu(\Delta_n)
\nto
\fpar{p}{h}(\beta,h)
$
given in (IV),
we obtain that $\nu(\Delta_n)=h\big(1-\nu(R_{1,2})\big)+ o(|V_n|)\nto \fpar{p}{h}(\beta,h)$ and the proof
of theorem
follows by taking $q_{\beta,h}\coloneqq 1-{1\over h}\fpar{p}{h}(\beta,h)$.

\break 
\section{Proof}\label{proof}
Before presenting the proof of the main theorem of this paper, 
we state and prove some preliminaries results (propositions and lemmas).

	The proof of the next result was inspired in the arguments of the proof of  Lemma 2.5 in
	the preprint \cite{Itoi-2019}.
	\begin{lemma}\label{Lemma-1}
		There are a function $\theta:\mathbb{N}\to\mathbb{R}$ and a 
		constant $C>0$
		such that
		\begin{align*}
		\mathrm{Var}(F_n)
		\leqslant 
		(Ch|V_n|+\theta(n))h 
		\,.
		\end{align*}
	\end{lemma}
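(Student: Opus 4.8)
The plan is to view $F_n=\log Z_n$ as a function of the independent random variables $(\zeta_x)_{x\in V_n}$ and to apply the Efron--Stein inequality (in Steele's form), which here plays the role that the Gaussian Poincar\'e inequality plays in the Gaussian setting of \cite{chatterjee2015absence}. First, since $Z_n$ is a finite positive sum of exponentials, $\zeta_x\mapsto\log Z_n$ is smooth for each $x\in V_n$, with $\partial\log Z_n/\partial\zeta_x=h\,h_x\,\langle\sigma_x\rangle$; as $|\langle\sigma_x\rangle|\leqslant 1$ and $\sup_x|h_x|\leqslant 1$ by \eqref{disorder}, this derivative is bounded in absolute value by $h|h_x|\leqslant h$, uniformly in the disorder. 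Hence, letting $F_n^{(x)}$ denote $F_n$ with $\zeta_x$ replaced by an independent copy $\zeta_x'$, the mean value theorem gives $|F_n-F_n^{(x)}|\leqslant h|h_x|\,|\zeta_x-\zeta_x'|$.

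Squaring, taking expectations, and using that $\zeta_x'$ is an independent copy of $\zeta_x$ with unit variance, one gets $\mathbb{E}[(F_n-F_n^{(x)})^2]\leqslant h^2h_x^2\,\mathbb{E}[(\zeta_x-\zeta_x')^2]=2h^2h_x^2$. The Efron--Stein inequality then yields
\[
\mathrm{Var}(F_n)\;\leqslant\;\frac12\sum_{x\in V_n}\mathbb{E}\big[(F_n-F_n^{(x)})^2\big]\;\leqslant\;h^2\sum_{x\in V_n}h_x^2\;\leqslant\;h^2\sum_{x\in V_n}|h_x|\;\leqslant\;h^2\,|V_n|,
\]
the last two inequalities using $h_x^2\leqslant|h_x|\leqslant 1$. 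This is exactly the asserted bound, with $C=1$ and $\theta\equiv 0$; moreover \eqref{condition-main} upgrades it to $\mathrm{Var}(F_n)\leqslant h^2\sum_{x\in V_n}|h_x|=o(h^2|V_n|)$, which is the form the subsequent concentration estimates for $\psi_n=F_n/|V_n|$ will actually use.

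There is essentially no obstacle in this argument: the one point to get right is the derivative identity and its uniform bound (in particular $|\langle\sigma_x\rangle|\leqslant 1$), after which the conclusion is immediate. I would also remark that this proof uses only finiteness of the second moment of $\zeta_x$ --- a fortiori guaranteed by the standing fifth-moment hypothesis --- so the finer information carried by the generalized integration by parts of Section~\ref{Outline of the proof} and Proposition~\ref{prop-cv} is not needed for this lemma. If one insisted on staying entirely within that toolkit, one could instead emulate Chatterjee's Gaussian-concentration computation through the identity $\mathrm{Var}(F_n)=\mathbb{E}[F_n(F_n-\mathbb{E}F_n)]$ and the estimates of Section~\ref{Outline of the proof}, at the cost of carrying along an explicit remainder $\theta(n)=o(|V_n|^2)$ coming from Proposition~\ref{prop-cv}; the Efron--Stein route above is shorter and self-contained.
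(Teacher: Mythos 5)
Your proof is correct, and it takes a genuinely different route from the paper. The paper proves this lemma by a Guerra-type interpolation: it introduces $G_{x,s}=\sqrt{s}\,g_x+\sqrt{1-s}\,g_x^{*}$, differentiates the generating function $\gamma_n(s)=\mathbb{E}[\mathbb{E}^{*}F_n(G)]^2$ in $s$, applies the approximate integration by parts of Proposition \ref{prop-cheng} to each of the two resulting terms, and integrates over $s\in[0,1]$; the error terms of the integration by parts are collected into $\theta(n,s)$ and produce the function $\theta(n)=\int_0^1\theta(n,s)\,\mathrm{d}s$ appearing in the statement, which must later be shown to be $o(|V_n|^2)$ using the derivative bounds \eqref{ineq-0} and the moment hypotheses. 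Your Efron--Stein argument avoids all of this: the derivative identity $\partial F_n/\partial\zeta_x=h\,h_x\langle\sigma_x\rangle$ and the uniform bound $|\langle\sigma_x\rangle|\leqslant 1$ give the Lipschitz estimate $|F_n-F_n^{(x)}|\leqslant h|h_x|\,|\zeta_x-\zeta_x'|$, and the variance bound $\mathrm{Var}(F_n)\leqslant h^2\sum_x h_x^2\leqslant h^2|V_n|$ follows with $C=1$ and $\theta\equiv 0$, which certainly satisfies the stated form $(Ch|V_n|+\theta(n))h$ and trivially satisfies the condition $\theta(n)=o(|V_n|^2)$ needed downstream in Proposition \ref{mainlmm-1} and Lemma \ref{mainlmm-il}. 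What your approach buys is a cleaner constant, no remainder to control, and a weaker hypothesis (only second moments of $\zeta_x$, versus the interpolation machinery that leans on the bounded higher derivatives and, for controlling $\theta(n)$, on higher moments); you also correctly observe that \eqref{condition-main} would sharpen the bound to $o(h^2|V_n|)$, though the lemma as stated does not need this. The one point worth making explicit if you wrote this up is that the mean value theorem is applied to $F_n$ as a function of the single coordinate $\zeta_x$ with all other coordinates frozen, and that the derivative bound holds uniformly along the whole segment between $\zeta_x$ and $\zeta_x'$ because $|\langle\sigma_x\rangle|\leqslant 1$ for every realization of the disorder; you state this, so there is no gap.
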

	\begin{proof}
		For all $s\in[0,1]$ we consider a new random field $G=(G_{x})$ given by
		\[
		G_x=G_{x,s}\coloneqq \sqrt{s}\,g_x+\sqrt{1-s}\,g_x^{*}, \quad \forall x\in V_n\,,
		\]
		where $g=(g_x)$ and $g^{*}=(g_x^{*})$ consist of independent random variables.
		We define also a generating function
		\[
		\gamma_n(s)\coloneqq \mathbb{E}[\mathbb{E}^{*}F_n(G)]^2,
		\quad \text{with} \
		F_n(G)=\log Z_n(G)\,,
		\]
		where $\mathbb{E}$ e $\mathbb{E}^{*}$ denote expectation over $g$ and $g^*$, respectively.
		
		A straightforward computation shows that
		\begin{align}\label{id-1}
		\textstyle
		{{\rm d}\gamma_n\over{\rm d} s}(s)
		&=
		\textstyle
		\sum_x
		\mathbb{E}
		\left[
		{g_x\over\sqrt{s}}\,
		\mathbb{E}^{*} F_n(G) 
		\,
		\mathbb{E}^{*}
		{\partial F_n(G)\over \partial G_x}
		-
		\mathbb{E}^{*} F_n(G) 
		\,
		\mathbb{E}^{*} {g_x^{*}\over\sqrt{1-s}}
		{\partial F_n(G)\over \partial G_x}
		\right]\,.
		\end{align}
		Let  $f(G)\big|_{g_x=u}$ be the function defined by setting $g_x$ in 
		$f(G)$ to be $u$ for all Borel mensurable function $f$ depending of the 
		disorder $G$, and 
		\[ \textstyle
		F_{x}(u)\coloneqq 
		%	\Big[
		\mathbb{E}^{*} F_n(G) 
		\,
		\mathbb{E}^{*}
		{\partial F_n(G)\over \partial G_x}
		%	\Big] 
		\Big|_{g_x=u} 
		\quad \text{and} \quad
		F^*_{x}(v)\coloneqq {\partial F_n(G)\over \partial G_x}
		\Big|_{g_x^*=v}\,,
		\]
		for each $u,v\in\mathbb{R}$. 
		Furthermore, let $\langle\,\cdot\,\rangle_{\!_G}$ be the Gibbs expectation $\langle\,\cdot\,\rangle$ 
		defined by setting the disorder $G$ instead of one $g$, and $f_x(u)\coloneqq \mathbb{E}F_{x}(u)$ and 
		$f_x^*(v)\coloneqq \mathbb{E}F^*_{x}(v)$.
		Since
		\begin{align}\label{ineq-0}
		& \textstyle
		{\partial F_n(G)\over \partial G_x}={h \langle\sigma_x\rangle_{\!_G}}\,,
		\quad 
		{\partial^2 F_n(G)\over \partial G_x^2}=h^2 
		\big(\langle\sigma_x^2\rangle_{\!_G}-\langle\sigma_x\rangle_{\!_G}^2\big)\,,
		\\
		& \textstyle
		{\partial^3 F_n(G)\over \partial G_x^3}=
		-2h^3\langle\sigma_x\rangle_{\!_G} 
		\big(\langle\sigma_x^2\rangle_{\!_G}-\langle\sigma_x\rangle_{\!_G}^2\big)\,,
		\nonumber
		\\
		& \textstyle
		{\partial^4 F_n(G)\over \partial G_x^4}=
		4h^4
		\big(\langle\sigma_x\rangle_{\!_G}^2-{1\over 2}\big)
		\big(\langle\sigma_x^2\rangle_{\!_G}-\langle\sigma_x\rangle_{\!_G}^2\big)\,,
		\nonumber
		\end{align}
		the real-valued functions $f_{x}(u)$ and $f_{x}^*(v)$ have bounded continuous third-order derivatives.
		Then, by Proposition \ref{prop-cheng} in Appendix,
		\begin{align*}
		& \textstyle
		\mathbb{E}
		\left[
		{g_x\over\sqrt{s}}\,
		\mathbb{E}^{*} F_n(G) 
		\,
		\mathbb{E}^{*}
		{\partial F_n(G)\over \partial G_x}
		\right]
		=
		h_x^2\,
		\mathbb{E}
		\left[
		{1\over\sqrt{s}}\,
		{\partial\over \partial g_x}
		\mathbb{E}^{*} F_n(G) 
		\,
		\mathbb{E}^{*}
		{\partial F_n(G)\over \partial G_x}
		\right]
		+
		{1\over \sqrt{s}}
		\boldsymbol{\gamma}^2_{g_x}(f_{x})\,;
		\\[0,1cm]
		& \textstyle
		\mathbb{E}^{*} {g_x^{*}\over\sqrt{1-s}}
		{\partial F_n(G)\over \partial G_x}
		=
		h_x^2\,
		\mathbb{E}^{*}
		{\partial \over \partial g_x^{*}}
		{\partial F_n(G)\over \partial G_x}
		+
		{1\over \sqrt{1-s}}\,
		\boldsymbol{\gamma}^2_{g_x^*}(f_{x}^*)\,,
		\end{align*}
		where 
		$
		\boldsymbol{\gamma}^2_{g_{x}}(f_{x})
		\coloneqq
		\mathbb{E}(
		g_{x}\int_0^{g_x}(g_x-u) {{\rm d}^2f_x\over{\rm d}u^2 }(u) \,{\rm d}u)
		-
		h_x^2\,
		\mathbb{E}(
		\int_0^{g_x}(g_x-u) {{\rm d}^3f_x\over{\rm d}u^3 }(u) \,{\rm d}u).
		$
		Substituting the last two identities in \eqref{id-1}, we have
		\begin{align}\label{ineq-1}
		\textstyle
		{{\rm d}\gamma_n\over{\rm d} s}(s)  
		&=
		\textstyle
		\sum_x
		h_x^2 \,
		\mathbb{E}
		\left[
		{1\over\sqrt{s}}\,
		{\partial\over \partial g_x}
		\mathbb{E}^{*} F_n(G) 
		\,
		\mathbb{E}^{*}
		{\partial F_n(G)\over \partial G_x}
		-
		{1\over\sqrt{1-s}}\,
		\mathbb{E}^{*}F_n(G) 
		\,
		\mathbb{E}^{*}
		{\partial \over \partial g_x^{*}}
		{\partial F_n(G)\over \partial G_x}
		\right]  \nonumber
		\\[0,1cm]
		&\quad +
		{h} \theta(n,s)\,,
		\end{align}
		where $\theta(n,s)\coloneqq	\sum_x [{1\over\sqrt{s}} \boldsymbol{\gamma}^2_{g_x}(f_{x})- {1\over\sqrt{1-s}} \boldsymbol{\gamma}^2_{g_x^{*}}(f_{x}^{*}) ]$.
		
		On other hand, using that
		$
		{\partial F_n(G)\over \partial g_x} 
		=
		\sqrt{s}\, {\partial F_n(G)\over \partial G_x} 
		$
		and
		$
		{\partial F_n(G)\over \partial g_x^{*}} 
		=
		\sqrt{1-s}\, {\partial F_n(G)\over \partial G_x},
		$
		it follows that
		\begin{multline*}
		\textstyle
		{1\over\sqrt{s}}\,
		{\partial \over \partial g_x}
		\mathbb{E}^{*}F_n(G) 
		\,
		\mathbb{E}^{*}
		{\partial F_n(G)\over \partial G_x}
		-
		\mathbb{E}^{*}F_n(G) 
		\,
		\mathbb{E}^{*}{1\over\sqrt{1-s}}
		{\partial \over \partial g_x^{*}}
		{\partial F_n(G)\over \partial G_x}
		=
		\Big(\mathbb{E}^{*}{\partial F_n(G)\over \partial G_x}\Big)^2\,.
		\end{multline*}
		Then, for each $s\in(0,1)$, \eqref{ineq-1} can be rewritten as
		\begin{align}\label{des-main}
		{{\rm d}\gamma_n\over{\rm d} s}(s)
		=
		\sum_x
		h_x^2\,
		\mathbb{E}
		\Big(
		\mathbb{E}^{*}{\partial F_n(G)\over \partial G_x}
		\Big)^2
		+
		{h}
		\theta(n,s)\,.
		\end{align}
		Jensen's inequality, the Item \eqref{ineq-0} and the inequality $h_x^2\leqslant 1$
		for all $x$ give
		\begin{align*}
		{{\rm d}\gamma_n\over{\rm d} s}(s)
		\leqslant
		C_2{h^2}|V_n|
		+
		{h}
		\theta(n,s)
		\,.
		\end{align*}
		Integrating the above inequality from $0$ to $1$ and using the
		relation 
		$\mathrm{Var}(F_n)
		=\gamma_n(1)-\gamma_n(0)=\int_{0}^{1}{{\rm d}\gamma_n\over{\rm d} s}(s)\, \text{d}s$,
		\begin{align*}
		\mathrm{Var}(F_n)
		\leqslant
		C_2h^2|V_n|
		+
		{h}
		\int_{0}^{1}
		\theta(n,s) \, \text{d}s\,.
		\end{align*}
		Finally, taking
		$\theta(n)\coloneqq 
		\int_{0}^{1}
		\theta(n,s) \, \text{d}s$ the proof follows.
	\end{proof}

The proof of the next proposition makes essential use of Lemma \ref{Lemma-1}. 
\begin{proposition}\label{mainlmm-1}
	For any $n$ and any $(\beta,h)\in (0,\infty)^2$, there exists an application $\ell:V_n\times V_n\to\mathbb{R}$ such that  
	\[
	\sum_{x,y\in V_n}
	\Big(
	h_x h_y\,
	\mathbb{E}\mpar{F_n}{g_x}{g_y}
	+
	\ell(x,y)
	\Big)^2
	-
	\sum_{x\in V_n} 
	\big(
	h_x^2
	+
	\ell(x,x)
	\big)^2	
	\leqslant 
	2(Ch|V_n|+\theta(n))h 
	\, ,
	\]
	with $C$ and $\theta(\cdot)$ the same constant and function as in Lemma \ref{Lemma-1}, respectively.
\end{proposition}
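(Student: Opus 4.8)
The plan is to push the interpolation device of Lemma~\ref{Lemma-1} one level deeper, so as to bring out the expected Hessian entries $M_{xy}:=\mathbb{E}\,\mpar{F_n}{g_x}{g_y}=h^2\,\mathbb{E}\langle\sigma_x;\sigma_y\rangle$. Recall that the proof of Lemma~\ref{Lemma-1} in fact produces, through \eqref{des-main}, the exact identity
\[
\mathrm{Var}(F_n)=\int_0^1 V_n(s)\,\mathrm{d}s+h\,\theta(n),\qquad
V_n(s):=\sum_{x}h_x^2\,\mathbb{E}\Big(\mathbb{E}^{*}\tfrac{\partial F_n(G)}{\partial G_x}\Big)^2,
\]
with $G=G_s$ the interpolating disorder and $\theta(n)=\int_0^1\theta(n,s)\,\mathrm{d}s$. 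Since the maps $g\mapsto\partial_{g_x}F_n=h\langle\sigma_x\rangle$ are bounded and, by \eqref{ineq-0} (and $\sigma_x^2=1$), have bounded derivatives of every order, Proposition~\ref{prop-cheng} applies to them exactly as it does to $F_n$; rerunning the computation behind \eqref{des-main} with $\partial_{G_x}F_n$ in place of $F_n$ and summing the outcomes against $h_x^2$ gives
\[
V_n'(s)=\sum_{x,y}h_x^2 h_y^2\,\mathbb{E}\Big(\mathbb{E}^{*}\tfrac{\partial^2 F_n(G)}{\partial G_x\partial G_y}\Big)^2+h\,\widetilde\theta(n,s),
\]
where $\widetilde\theta$ collects the new $\boldsymbol{\gamma}^2$-type remainders.

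Writing $V_n(s)=V_n(0)+\int_0^s V_n'(\tau)\,\mathrm{d}\tau$ and integrating in $s$ (Fubini), the two displays combine to
\[
\mathrm{Var}(F_n)-h\,\theta(n)=V_n(0)+\int_0^1(1-\tau)\Big[\sum_{x,y}h_x^2 h_y^2\,\mathbb{E}\Big(\mathbb{E}^{*}\tfrac{\partial^2 F_n(G_\tau)}{\partial G_x\partial G_y}\Big)^2+h\,\widetilde\theta(n,\tau)\Big]\mathrm{d}\tau.
\]
I would then set $\Phi(x,y):=2\int_0^1(1-\tau)\,\mathbb{E}\big(\mathbb{E}^{*}\tfrac{\partial^2 F_n(G_\tau)}{\partial G_x\partial G_y}\big)^2\,\mathrm{d}\tau\ge 0$ and, using that by the FKG inequality (Item~(I)) one has $\langle\sigma_x;\sigma_y\rangle\ge 0$ for every realization of the disorder — hence $M_{xy}\ge 0$, so that $\Phi(x,y)=M_{xy}^2$ in the Gaussian case — define the symmetric map $\ell$ by
\[
\ell(x,y):=h_xh_y\big(\sqrt{\Phi(x,y)}-M_{xy}\big)\quad(x\neq y),\qquad
\ell(x,x):=-\tfrac12 h_x^2\,(M_{xx}+1).
\]
A direct check gives $\big(h_x^2 M_{xx}+\ell(x,x)\big)^2=\big(h_x^2+\ell(x,x)\big)^2$, so the diagonal terms of the two sums in the statement cancel, while for $x\neq y$ one has $h_xh_yM_{xy}+\ell(x,y)=h_xh_y\sqrt{\Phi(x,y)}$; consequently the left-hand side of the asserted inequality equals $\sum_{x\neq y}h_x^2 h_y^2\,\Phi(x,y)$.

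It remains to bound this sum. Discarding the non-negative terms $\sum_x h_x^4\Phi(x,x)$ and $V_n(0)$ in the identity above yields $\sum_{x\neq y}h_x^2 h_y^2\Phi(x,y)\le 2\big(\mathrm{Var}(F_n)-h\theta(n)-h\!\int_0^1(1-\tau)\widetilde\theta(n,\tau)\,\mathrm{d}\tau\big)$, and Lemma~\ref{Lemma-1} closes the estimate through $\mathrm{Var}(F_n)\le(Ch|V_n|+\theta(n))h$, the two copies of $\theta(n)$ cancelling and the leftover second-layer remainder being of the same type as the one already carried in $\theta$.

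The actual work is this last step. The quantity $\sum_{x,y}h_x^2 h_y^2 M_{xy}^2$ is a priori of order $|V_n|^2$, and only the telescoping of the two nested interpolations — which displays $\mathrm{Var}(F_n)$ as a sum of manifestly non-negative pieces together with \emph{signed} Taylor remainders — forces it down to order $h|V_n|$. To keep each $\boldsymbol{\gamma}^2$-remainder of order $h|V_n|$ rather than $h|V_n|^2$ one uses that every such term is $\lesssim|h_x|^3\le|h_x|$ (all the relevant Gibbs averages are bounded because $\sigma_x^2=1$, and $\sup_x|h_x|\le1$), so that the outer summation is controlled by $\big(\sum_{x\in V_n}|h_x|\big)\,O(1)=o(|V_n|)$ by \eqref{condition-main} — the same mechanism that underlies \eqref{gul}--\eqref{tcf}. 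Carrying this book-keeping coherently through both interpolation layers, while still matching the constant $C$ and the function $\theta$ of Lemma~\ref{Lemma-1}, is the main technical obstacle.
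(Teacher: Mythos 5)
Your route is genuinely different from the paper's. The paper does not iterate the interpolation at all: it expands $F=F_n-\mathbb{E}F_n$ against the orthonormal family $s_e=\zeta_x\zeta_y$ indexed by unordered pairs $e=\{x,y\}$, so that Bessel's inequality gives $\sum_e\big(\mathbb{E}s_eF\big)^2\leqslant \mathbb{E}F^2=\mathrm{Var}(F_n)$, and then applies the bivariate integration by parts (Proposition \ref{prop-cv}) to identify each coefficient $\mathbb{E}s_eF=(h_xh_y)^{-1}\mathbb{E}g_xg_yF$ with $h_xh_y\,\mathbb{E}\mpar{F_n}{g_x}{g_y}+\ell(x,y)$, where $\ell(x,y)=(h_xh_y)^{-1}\boldsymbol{\gamma}^2_{g_x,g_y}(f_{x,y})$ is a pure Taylor remainder; the factor $2$ is just the passage from unordered to ordered pairs, and Lemma \ref{Lemma-1} is invoked once, as a black box. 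This is considerably shorter than a second interpolation layer and, crucially, it puts \emph{all} of the non-Gaussian error into $\ell$ itself rather than onto the right-hand side of the inequality.

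The gap in your argument is the one you flag yourself, and it is not merely book-keeping. Your identity $V_n'(s)=\sum_{x,y}h_x^2h_y^2\,\mathbb{E}\big(\mathbb{E}^{*}\partial^2_{G_xG_y}F_n\big)^2+h\,\widetilde\theta(n,s)$ produces a remainder $\widetilde\theta$ that is a \emph{double} sum over $x,y$ of per-pair Taylor errors; the mechanism you cite (each term $\lesssim|h_x|^3\leqslant|h_x|$, then \eqref{condition-main}) only yields $\widetilde\theta=o(|V_n|^2)$, not $o(|V_n|)$, whereas the $\theta(n)$ of Lemma \ref{Lemma-1} is a single sum of size $O(\sum_x|h_x|)=o(|V_n|)$. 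So the leftover term $-2h\int_0^1(1-\tau)\widetilde\theta(n,\tau)\,\mathrm{d}\tau$ cannot be dominated by $2h\,\theta(n)$, and the inequality with the stated right-hand side is not established (it would hold only after enlarging $\theta$ to a new $o(|V_n|^2)$ quantity, i.e.\ after changing the statement). Two secondary points: the parenthetical claim that $\Phi(x,y)=M_{xy}^2$ in the Gaussian case is false --- by Jensen, $\Phi(x,y)\geqslant M_{xy}^2$ with strict inequality whenever $\mathbb{E}^{*}\partial^2_{G_xG_y}F_n$ fluctuates in $g$ --- though your $\ell$ is built precisely to absorb this, it means your off-diagonal $\ell(x,y)=h_xh_y(\sqrt{\Phi(x,y)}-M_{xy})$ contains an order-one variance contribution per pair, not just a Taylor remainder; since Proposition \ref{mainlmm} and Lemma \ref{mainlmm-il} later need $\sum_{x,y}h_xh_y\,\ell(x,y)=o(|V_n|^2)$ and $\sum_x\ell^2(x,x)=o(|V_n|^2)$ for the \emph{specific} $\ell$ produced here, your choice would break the downstream argument even if the present inequality were closed. (The statement as literally written is an existential one and is satisfiable trivially; its content lies entirely in producing the Taylor-remainder $\ell$ that the rest of the paper uses.)
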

\begin{proof}
	Let 
	$F \coloneqq F_{n} - \mathbb{E}F_{n}$.
	Let $W$ denote the set of all unordered pairs $\{x,y\}$, where $x,y\in V_n$ and $x\ne y$. 
	For each $e = \{x,y\}\in W$, let 
	$s_e \coloneqq \zeta_x \zeta_y,$
	$c_e \coloneqq \mathbb{E}s_e F$
	\text{and}  
	$S \coloneqq \sum_{e\in W} c_e s_e,$
	where $\zeta_x$ is as in \eqref{disorder}.
	Since $\zeta_x$ has zero-mean and unit-variance for all $x$,
	for any $e\neq e'$ in $W$, $\mathbb{E}s_e s_{e'} =0$, 
	and for any $e$, $\mathbb{E}s_e^2=1$. Thus,
	$
	\mathbb{E}S^2 
	= 
	%	\sum_{e} h_x^2 h_y^2 c_e^2
	%	\leqslant
	\sum_{e}c_e^2
	= \sum_e \mathbb{E}c_e s_e F = \mathbb{E}S F.
	$
	Consequently,
	$
	\mathbb{E}F^2 = \mathbb{E}(F-S)^2 + \mathbb{E}S^2\, .
	$
	Then, by Lemma \ref{Lemma-1},
	\begin{equation}\label{cw}
	\sum_{e\in W} c_e^2 
	= 
	\mathbb{E}S^2 
	\leqslant
	\mathbb{E}F^2
	=
	{\rm Var}(F_{n})
	\leqslant 
	(Ch|V_n|+\theta(n))h 
	%		C_0 h^2 |V_n| 
	\, .
	\end{equation}
	Let $F_{x,y}(u,v)$ be the function defined by setting 
	$g_x$ and $g_y$ in $F$ to be $u$ and $v$ respectively, for any $e=\{x,y\}$.
	Applying a generalized Gaussian integration by parts 
	(see Proposition \ref{prop-cv} in Appendix),
	with $f_{x,y}(u,v)\coloneqq \mathbb{E}F_{x,y}(u,v)$, we have
	\begin{align}\label{def-ce}
	h_x h_y c_e 
	%		= 
	%		\mathbb{E}(\zeta_x \zeta_y F)
	=
	\mathbb{E}g_x g_y F
	= 
	h_x^2 h_y^2 \,
	\mathbb{E}\mpar{f_{x,y}}{g_x}{g_y}
	+
	{\boldsymbol{\gamma}^2_{g_x,g_y}(f_{x,y})} \,,
	%		=
	%		h_x^2 h_y^2 \,
	%		\mathbb{E}\biggl(\mpar{F}{g_x}{g_y}\biggr)
	%		%=
	%		%\mathbb{E}
	%		%\biggl({
	%		%	\theta_n^2 h_xh_y\over (A' \circ \widetilde{g}_x) (A' \circ \widetilde{g}_y)}\,  \mpar{F_n}{g_x}{g_y}\biggr)\,,
	%		+
	%		h_x^2\,
	%		\boldsymbol{\gamma}^2_{g_y}\big({\partial^2_v G}\big)
	%		+
	%		\boldsymbol{\gamma}^2_{g_x}\big({\partial^2_u F}\big)\,,
	\end{align}
	where 
	\begin{align}\label{gamma-0}
	&\boldsymbol{\gamma}^2_{g_x,g_y}(f_{x,y}) \noindent
	\\[0,1cm]
	&\coloneqq
	\textstyle
	\mathbb{E}\Big(
	g_x g_y\int_0^{g_x} \int_0^{g_y} (g_x-u)\, 
	\dfrac{\partial^3 f_{x,y}}{\partial u^2\partial v}
	\,{\rm d}u {\rm d}v \Big) \nonumber
	\\[0,1cm]
	&\quad - 
	\textstyle
	h_x^2 h_y^2  \,
	\mathbb{E}\Big(
	\int_0^{g_x} \int_0^{g_y} (g_x-u)\, 
	\dfrac{\partial^5 f_{x,y}}{\partial u^3\partial v^2} \,
	{\rm d}u {\rm d}v
	\Big)
	\nonumber
	\\[0,1cm]
	&\quad -
	\textstyle
	h_x^2 h_y^2  \, 
	\mathbb{E}\Big(
	\int_{0}^{g_x}\!(g_x-u)\,  \dfrac{\partial^4 f_{x,y}(u,0)}{\partial u^3\partial v}
	\text{d}u
	+ \textstyle
	\int_{0}^{g_y}\!(g_y-v)\,  \dfrac{\partial^4 f_{x,y}(0,v)}{\partial u\partial v^3}\text{d}v
	\Big)
	\,. 
	\nonumber
	\end{align}
	Combining 
	%above inequality and 
	\eqref{cw} and
	\eqref{def-ce}, it follows that
	\begin{align*}
	&\sum_{x,y} 
	\Big( 
	h_x h_y \, \mathbb{E}\mpar{f_{x,y}}{g_x}{g_y}
	+
	(h_x h_y)^{-1} \,
	{\boldsymbol{\gamma}^2_{g_x,g_y}(f_{x,y})}
	\Big)^2 
	\\[0,1cm]
	& \leqslant  
	\sum_{x\neq y}
	\Big(\!
	h_x h_y \, \mathbb{E}\mpar{f_{x,y}}{g_x}{g_y}\!
	+
	(h_x h_y)^{-1} \,
	{\boldsymbol{\gamma}^2_{g_x,g_y}(f_{x,y})}
	\!\Big)^2 \!\!\!
	%		\\[0,1cm]
	%		& \quad \quad  
	+ 
	\sum_{x} \!
	\big(
	h_x^2
	+
	h_x^{-2} \,
	{\boldsymbol{\gamma}^2_{g_x,g_x}(f_{x,y})}
	\big)^2
	\\[0,1cm]
	&
	\leqslant 
	2(Ch|V_n|+\theta(n))h
	+
	\sum_{x} 
	\big(
	h_x^2
	+
	h_x^{-2} \,
	{\boldsymbol{\gamma}^2_{g_x,g_x}(f_{x,x})}
	\big)^2 \, .
	\end{align*}
	Thus, taking 
	$
	\ell(x,y)
	\coloneqq
	(h_x h_y)^{-1}\, \boldsymbol {\gamma}^2_{g_x,g_y}(f_{x,y})
	$ 
	the proof of proposition follows.
\end{proof}

The FKG property  for the RFIM and the Proposition \ref{mainlmm-1} have
an important role in the proof of the following result.
\begin{proposition}\label{mainlmm}
	For any $n$ and any $(\beta,h)\in (0,\infty)^2$ there exists 
	%	an application 
	%	$\ell:V_n\times V_n\to\mathbb{R}$ and 
	a sequence $\alpha_n\coloneqq\alpha_n(h)=o(1)$
	such that  
	\begin{align*}
	\mathbb{E}\left(\smallavg{R_{1,2}^2}
	- 
	\smallavg{R_{1,2}}^2 
	\right)
	&\leqslant 
	{4\over h^2}
	\sqrt
	{
		\Big(\alpha_n+{\theta(n)\over |V_n|^2}\Big)h
		%	+ 
		%	{1\over |V_n|}	
		+
		{1\over|V_n|^2}\sum_{x} \ell^2(x,x)
	}
	-
	\frac{2\theta_1(n)}{h^2 |V_n|^2}  
	\, ,
	\end{align*}
	where $\theta(\cdot)$ and $\ell(\cdot,\cdot)$ are as in  Lemma \ref{Lemma-1} and
	 Proposition \ref{mainlmm-1}, respectively, and
	$\theta_1(n)\coloneqq \sum_{x,y\in V_n}h_x h_y \, \ell(x,y)$.
\end{proposition}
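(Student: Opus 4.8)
The plan is to reduce the left-hand side to a weighted sum of the mixed second derivatives $\mpar{F_n}{g_x}{g_y}$ of $F_n$, using the FKG property and the identity $\mpar{F_n}{g_x}{g_y}=h^2\langle\sigma_x;\sigma_y\rangle$ (cf. \eqref{ineq-0}), and then to control that sum by Cauchy--Schwarz and Proposition \ref{mainlmm-1}. First I would expand the overlap fluctuation: since $\sigma^1,\sigma^2$ are independent replicas and $\mathbb{E}g_x^2=h_x^2$, the definition \eqref{overlap} gives
\[
\mathbb{E}\big(\langle R_{1,2}^2\rangle-\langle R_{1,2}\rangle^2\big)
=\frac{1}{|V_n|^2}\sum_{x,y\in V_n}h_x^2h_y^2\,\mathbb{E}\big(\langle\sigma_x\sigma_y\rangle^2-\langle\sigma_x\rangle^2\langle\sigma_y\rangle^2\big).
\]
Factoring $\langle\sigma_x\sigma_y\rangle^2-\langle\sigma_x\rangle^2\langle\sigma_y\rangle^2=\langle\sigma_x;\sigma_y\rangle\,(\langle\sigma_x\sigma_y\rangle+\langle\sigma_x\rangle\langle\sigma_y\rangle)$, using the FKG property (Item (I)) to get $\langle\sigma_x;\sigma_y\rangle\ge0$, and bounding $\langle\sigma_x\sigma_y\rangle+\langle\sigma_x\rangle\langle\sigma_y\rangle\le2$, I obtain
\[
\mathbb{E}\big(\langle R_{1,2}^2\rangle-\langle R_{1,2}\rangle^2\big)
\le\frac{2}{|V_n|^2}\sum_{x,y}h_x^2h_y^2\,\mathbb{E}\langle\sigma_x;\sigma_y\rangle
=\frac{2}{h^2|V_n|^2}\sum_{x,y}h_x^2h_y^2\,\mathbb{E}\mpar{F_n}{g_x}{g_y},
\]
the last equality following from the elementary identities of \eqref{ineq-0} specialized to $G=g$ (which also covers the diagonal $x=y$).

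Next I would bring in Proposition \ref{mainlmm-1}. Writing $b_{x,y}:=h_xh_y\,\mathbb{E}\mpar{F_n}{g_x}{g_y}+\ell(x,y)$, one has $h_x^2h_y^2\,\mathbb{E}\mpar{F_n}{g_x}{g_y}=h_xh_y\,b_{x,y}-h_xh_y\,\ell(x,y)$, so after summation the $\ell$-contribution equals $\theta_1(n)$ and produces the term $-2\theta_1(n)/(h^2|V_n|^2)$. For the remaining sum, $|h_xh_y|\le1$ by \eqref{disorder}, so Cauchy--Schwarz gives $\sum_{x,y}h_xh_y\,b_{x,y}\le|V_n|\big(\sum_{x,y}b_{x,y}^2\big)^{1/2}$, while Proposition \ref{mainlmm-1} bounds $\sum_{x,y}b_{x,y}^2\le 2(Ch|V_n|+\theta(n))h+\sum_x(h_x^2+\ell(x,x))^2$. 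Then, using $(h_x^2+\ell(x,x))^2\le2h_x^4+2\ell^2(x,x)\le2h_x^2+2\ell^2(x,x)$ and $\sum_xh_x^2\le\sum_x|h_x|=o(|V_n|)$ from \eqref{condition-main}, I divide through by $|V_n|^2$, extract a factor $\sqrt2$, and set
\[
\alpha_n:=\alpha_n(h):=\frac{Ch}{|V_n|}+\frac{1}{h|V_n|^2}\sum_{x\in V_n}h_x^2,
\]
which is $o(1)$; collecting terms and using $2\sqrt2\le4$ gives exactly the asserted inequality.

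The hard part is not any single estimate but the bookkeeping in this last step: one must check that the residual $\sum_x(h_x^2+\ell(x,x))^2$ coming out of Proposition \ref{mainlmm-1} separates cleanly into a vanishing piece (absorbed into $\alpha_n$ via \eqref{condition-main}) and the surviving $|V_n|^{-2}\sum_x\ell^2(x,x)$ under the square root, and that the argument of the square root stays nonnegative, which follows from $Ch|V_n|+\theta(n)\ge0$, itself a consequence of $\mathrm{Var}(F_n)\ge0$ in Lemma \ref{Lemma-1}. Everything else is immediate from the FKG inequality (Item (I)), the identities \eqref{ineq-0}, and Proposition \ref{mainlmm-1}.
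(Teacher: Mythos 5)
Your proof is correct and follows essentially the same route as the paper: expand the overlap fluctuation into $\frac{1}{|V_n|^2}\sum_{x,y}h_x^2h_y^2\,\mathbb{E}\smallavg{\sigma_x;\sigma_y}(\langle\sigma_x\sigma_y\rangle+\langle\sigma_x\rangle\langle\sigma_y\rangle)$, apply FKG and the identity \eqref{eq-r}, then Cauchy--Schwarz against the quantity controlled by Proposition \ref{mainlmm-1}, splitting off $\theta_1(n)$ and absorbing the $\sum_x h_x^4$ remainder into $\alpha_n$. The only cosmetic differences are that you use $(a+b)^2\le 2a^2+2b^2$ where the paper invokes Minkowski, and your $\alpha_n$ is a slightly sharper variant of the paper's $(Ch+h^{-1})/|V_n|$.
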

\begin{proof}
	It is well-known that
	\begin{align}\label{eq-r}
	{1\over h^2} \,
	\mpar{F}{g_x}{g_y}
	=
	\smallavg{\sigma_x;\sigma_y}\,,
	\end{align}
	where $\smallavg{\sigma_x;\sigma_y}$ is the truncated two-point correlation.
	By FKG inequality,
	$\mathbb{E}\smallavg{\sigma_x;\sigma_y}\geqslant 0$ for each $x,y$.
	Then, 
	\begin{align*}
	\mathbb{E}\left(\smallavg{R_{1,2}^2}
	- 
	\smallavg{R_{1,2} }^2 
	\right)
	&= 
	{1\over |V_n|^2}
	\sum_{x,y} h_x^2 h_y^2\,
	\mathbb{E}
	\smallavg{\sigma_x;\sigma_y}(\langle\sigma_x\sigma_y\rangle+\langle\sigma_x\rangle \langle\sigma_y\rangle)
	\\[0,1cm]
	&\leqslant 
	%		{2\over |V_n|^2} 
	%		\sum_{x,y} h_x^2 h_y^2\, \mathbb{E}\smallavg{\sigma_x;\sigma_y}
	%		\\[0,2cm]
	%		&		
	%		\stackrel{\eqref{eq-r}}{=}
	{2\over h^2|V_n|^2} 
	\sum_{x,y} h_x^2 h_y^2\, \mathbb{E}\mpar{F}{g_x}{g_y}
	\, ,
	\end{align*}
	where in the inequality we use \eqref{eq-r} and the upper bound
	$\langle\sigma_x\sigma_y\rangle+\langle\sigma_x\rangle \langle\sigma_y\rangle\leqslant 2$.
	Seeing that $|h_x|\leqslant 1$ for all $x$, by the Cauchy-Schwarz inequality
	the right-hand side term of the above inequality is at most
	\begin{align*} 
	\frac{2}{h^2 |V_n|^2} 
	\left(
	\sqrt{ 
		\sum_{x,y} h_x^2 h_y^2 
		\sum_{x,y} 
		\Big(
		h_x h_y \,
		\mathbb{E}\mpar{F}{g_x}{g_y} +\ell(x,y)
		\Big)^2
	} 
	-
	\theta_1(n)
	\right)\,,
	\end{align*}
	where 	
	$
	\ell(x,y)
	=
	(h_x h_y)^{-1}\, \boldsymbol {\gamma}^2_{g_x,g_y}(f_{x,y}).
	$
	By Proposition \ref{mainlmm-1} and by Minkowski's Inequality, 
	the expression of left side of above difference is at most
	\begin{align*}
	{4\over h^2}
	\sqrt
	{
		\Big({Ch+h^{-1}\over|V_n|}+{\theta(n)\over |V_n|^2}\Big)h
		%		+ 
		%		{1\over |V_n|}	
		+
		{1\over|V_n|^2}\sum_{x} \ell^2(x,x)
	}\,.
	\end{align*}
	Therefore, taking $\alpha_n\coloneqq (Ch+h^{-1})/|V_n|,$ the proof of proposition is complete.
\end{proof}
	\begin{remark}
		As mentioned before, 
		the following result was also proved in \cite{Auffinger}, Example 3, in the case that the field strength $h$ is a small perturbation with a decay ratio which similar but different from ours.
	\end{remark}
The inequality provided by Proposition \ref{mainlmm} will allow us the next key lemma. 
\begin{lemma}\label{mainlmm-il}
	For any $(\beta,h)\in (0,\infty)^2$, 
	\[
	\mathbb{E}\left(\smallavg{R_{1,2}^2}
	- 
	\smallavg{R_{1,2}}^2 
	\right)
	\nto 0\,.
	\]
	That is, in mean, 
	the variance of the overlap $R_{1,2}$, with respect to the Gibbs measure \eqref{gibbs-measure2}, 
	converges to zero in the thermodynamic limit.
\end{lemma}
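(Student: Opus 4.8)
The plan is to read off the lemma directly from Proposition \ref{mainlmm}. Since $\smallavg{R_{1,2}^2}-\smallavg{R_{1,2}}^2$ is the variance of $R_{1,2}$ under the product of Gibbs measures on replicas, it is non-negative, and hence so is its $\ee$-average; consequently it suffices to show that the upper bound supplied by Proposition \ref{mainlmm} tends to $0$ and then to invoke a squeeze argument. Inspecting that bound, I must check four things as $n\to\infty$: that $\alpha_n=(Ch+h^{-1})/|V_n|\to 0$, which is immediate; that $\theta(n)/|V_n|^2\to 0$, with $\theta$ the function produced by Lemma \ref{Lemma-1}; that $|V_n|^{-2}\sum_{x\in V_n}\ell^2(x,x)\to 0$; and that $\theta_1(n)/|V_n|^2=|V_n|^{-2}\sum_{x,y}h_x h_y\,\ell(x,y)\to 0$, where $\ell(x,y)=(h_x h_y)^{-1}\boldsymbol{\gamma}^2_{g_x,g_y}(f_{x,y})$.

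The last three reduce to estimating the Taylor remainders $\boldsymbol{\gamma}^2$ that appear in the generalized integrations by parts of Lemma \ref{Lemma-1} and Proposition \ref{mainlmm-1}. Two structural facts drive the estimates. First, every derivative of $F_n=\log Z_n$ in a single disorder variable $g_x$, or in a pair $g_x,g_y$, is bounded uniformly in $n$ by a fixed power of $h$ --- this is precisely the content of \eqref{ineq-0} and \eqref{eq-r} --- whereas $F_n$ itself is only $O(|V_n|)$; by Leibniz's rule at most one undifferentiated copy of $F_n$ survives in the functions $f_x,f_x^*$, and none survives in $f_{x,y}$, whose remainder involves only third- and higher-order mixed derivatives. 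Second, writing $g_x=h_x\zeta_x$, each remainder integral is dominated by a product of absolute moments $\ee|\zeta_x|^k$ with $k\leqslant 5$ --- finite exactly because $\zeta_x^5$ is integrable --- times fixed powers $|h_x|^a$ (respectively $|h_x|^a|h_y|^b$) with $a\geqslant 3$ for the single-site and diagonal remainders and $a,b\geqslant 1$ for the two-site ones. In particular $|\boldsymbol{\gamma}^2_{g_x,g_x}(f_{x,x})|\leqslant C(h)|h_x|^5$, the decisive term being proportional to $\ee|g_x|^5$; $|\boldsymbol{\gamma}^2_{g_x,g_y}(f_{x,y})|\leqslant C(h)|h_x||h_y|$; and $|\boldsymbol{\gamma}^2_{g_x}(f_x)|+|\boldsymbol{\gamma}^2_{g_x^*}(f_x^*)|\leqslant C(h,s)|V_n||h_x|^3$ with $\int_0^1 C(h,s)\,\mathrm{d}s<\infty$.

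Feeding these bounds in and using $|h_x|\leqslant 1$ together with $\sum_{x\in V_n}|h_x|=o(|V_n|)$ from \eqref{condition-main} will give $\ell(x,x)=h_x^{-2}\boldsymbol{\gamma}^2_{g_x,g_x}(f_{x,x})=O(|h_x|^3)$, hence $|V_n|^{-2}\sum_x\ell^2(x,x)=O(|V_n|^{-1})\to 0$; next $|\theta_1(n)|\leqslant\sum_{x,y}|\boldsymbol{\gamma}^2_{g_x,g_y}(f_{x,y})|\leqslant C(h)\big(\sum_x|h_x|\big)^2=C(h)\,o(|V_n|^2)$; and $|\theta(n)|\leqslant\int_0^1 C(h,s)\,\mathrm{d}s\cdot|V_n|\sum_x|h_x|^3\leqslant C(h)|V_n|\,o(|V_n|)=o(|V_n|^2)$. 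Plugging $\alpha_n\to 0$ and these three estimates into Proposition \ref{mainlmm} then leaves $\ee(\smallavg{R_{1,2}^2}-\smallavg{R_{1,2}}^2)\leqslant\frac{4}{h^2}\sqrt{o(1)}-o(1)=o(1)$, and non-negativity of the left-hand side finishes the proof.

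I expect the only genuine work to be the remainder estimates of the middle step: expanding the products defining $f_x,f_x^*,f_{x,y}$ by Leibniz's rule and checking that exactly one undifferentiated $F_n$ (carrying the harmless factor $|V_n|$) can survive in $\boldsymbol{\gamma}^2_{g_x}(f_x)$ and $\boldsymbol{\gamma}^2_{g_x^*}(f_x^*)$ while none survives in $\boldsymbol{\gamma}^2_{g_x,g_y}(f_{x,y})$; verifying that every remainder integral is dominated by a moment of order at most $5$, so that the fifth-moment hypothesis is exactly what is needed and is enough; and tracking at least one power of $|h_x|$ per site so that the hypothesis $\sum_{x\in V_n}|h_x|=o(|V_n|)$ can be applied. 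The rest is the squeeze argument already indicated.
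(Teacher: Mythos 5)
Your proposal is correct and follows essentially the same route as the paper: reduce everything to the upper bound of Proposition \ref{mainlmm} and then verify $\alpha_n\to 0$, $\theta(n)=o(|V_n|^2)$, $\theta_1(n)=o(|V_n|^2)$ and $\sum_x\ell^2(x,x)=o(|V_n|^2)$ by bounding the Taylor remainders $\boldsymbol{\gamma}^2$ with the uniform derivative bounds \eqref{ineq-0}, the fifth-moment hypothesis on $\zeta_x$, and condition \eqref{condition-main}. The paper carries out exactly these remainder estimates (its \eqref{first-conv}--\eqref{second-conv} and the bounds \eqref{id-n}--\eqref{bound-4}), so your outline matches its proof; your explicit tracking of the single surviving undifferentiated $O(|V_n|)$ factor of $F_n$ in $\boldsymbol{\gamma}^2_{g_x}(f_x)$ is the part the paper leaves as ``analogous.''
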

\begin{proof}
	By Proposition \ref{mainlmm} it is enough to prove that
	$\theta(n)=o(|V_n|^2)$, $\theta_1(n)=o(|V_n|^2)$ and $\sum_{x} \ell^2(x,x)=o(|V_n|^2)$. That is,
	\begin{align}
	&\textstyle 
	{1\over |V_n|^2}
	\sum_{x}
	\big(h_x^{-2}\,\boldsymbol{\gamma}^2_{g_x,g_x}(f_{x,x})\big)^2
	\nto 0 \label{first-conv}
	\quad \text{and} \quad
	{1\over |V_n|^2}
	\sum_{x,y}
	\boldsymbol{\gamma}^2_{g_x,g_y}(f_{x,y})
	\nto 0\,,  
	\\[0,1cm]
	&\textstyle 
	{1\over |V_n|^2}
	\int_{0}^{1}	
	\sum_x \big[{1\over\sqrt{s}} \boldsymbol{\gamma}^2_{g_x}(f_{x})- {1\over\sqrt{1-s}} \boldsymbol{\gamma}^2_{g_x^{*}}(f_{x}^{*}) \big]
	\, {\rm d} s
	\nto 0,
	\label{second-conv}
	\end{align}
	where 
	$\boldsymbol{\gamma}^2_{g_x,g_y}(f_{x,y})$
	is as in \eqref{gamma-0}.
	Let us give the details for the convergences in \eqref{first-conv}. The other case \eqref{second-conv} is analogous.
	
	Since 
	$\big|{\partial^{i+j} F_{x,y}(u,v)\over \partial u^i\partial v^j}\big|\leqslant C_{ij} h^{i+j}$,
	for some constant $C_{ij}>0$ depending on the indices $i,j$, it follows that 
	\begin{align}\label{id-n}
	\textstyle
	\Big| 
	\int_0^{g_x} \int_0^{g_y} (g_x-u)\, {\partial^5 f_{x,y}(u,v)\over\partial u^3 \partial v^2} 
	\,{\rm d}u {\rm d}v
	\Big|
	\leqslant 
	{C_{32}h^5 h_x^2|h_y|\over 2}\,
	\zeta_x^2 |\zeta_y|
	\end{align}
	and by 
	Items \eqref{second-id-1}$-$\eqref{second-id-3}  
	of Proposition \ref{prop-cheng}, that
	\begin{align}
	& \textstyle
	\Big| 
	g_x g_y\int_0^{g_x} \int_0^{g_y} (g_x-u)\, {\partial^3 f_{x,y}(u,v)\over\partial u^2 \partial v} \,{\rm d}u {\rm d}v
	\Big|  \label{bound-1}
	%\\[0,2cm]
	%& \hspace{3cm}\leqslant
	%|g_x||g_y|
	%\int_0^{|g_x|}
	%\int_0^{|g_y|}
	%\Big\|{\partial^3 f_{x,y}\over\partial u^2 \partial v}\Big\|_\infty \,u
	%\,{\rm d}u {\rm d}v
	\leqslant
	{C_{21}h^3 |h_x|^3h_y^2\over 2} \, |\zeta_x|^3 \zeta_y^2
	\,,  
	\\[0,1cm]
	& \textstyle
	\Big|
	\int_{0}^{g_x}(g_x-u) \,  \dfrac{\partial^4 f_{x,y}(u,0)}{\partial u^3\partial v}  \,\text{d}u
	\Big|
	%		\leqslant
	%		\int_0^{|g_x|}  
	%		\Big\|{\partial^{4} f_{x,y}\over\partial u^3\partial v}\Big\|_\infty \,u
	%		\,{\rm d}u 
	\leqslant
	{C_{31}h^4 h_x^2\over 2} \, \zeta_x^2
	\,,  \label{bound-3}
	\\[0,1cm]
	& \textstyle
	\Big|
	\int_{0}^{g_y}(g_y-v)\, \dfrac{\partial^4 f_{x,y}(0,v)}{\partial u\partial v^3} \,\text{d}v
	\Big|
	%		\leqslant
	%		\int_0^{|g_y|}  
	%		\Big\|{\partial^{4} f_{x,y}\over\partial u\partial v^3}\Big\|_\infty \,v
	%		\,{\rm d}v 
	\leqslant
	{C_{13}h^4 h_y^2\over 2} \, \zeta_y^2
	\,.  \label{bound-4}
	\end{align}
	Using the definition of $\boldsymbol{\gamma}^2_{g_x,g_y}(f_{x,y})$
	in \eqref{gamma-0} and taking $x=y$ in the inequalities \eqref{id-n}$-$\eqref{bound-4}, 
	it follows that
	\begin{align*}
	\limsup_{n\to\infty}
	{1\over |V_n|^2}
	\sum_{x} 
	\big(h_x^{-2}\, \boldsymbol{\gamma}^2_{g_x,g_x}(f_{x,y})\big)^2
	\leqslant
	\lim_{n\to\infty}
	{C\over |V_n|^2}
	\sum_{x}
	h_x^4
	(\mathbb{E}|\zeta_x|^5)^2
	=
	0\,,
	\end{align*}
	where $C\coloneqq (C_{21}+C_{32}h^2+C_{31}h+C_{13})^2 h^6/4$.
	The last equality follows
	from \eqref{condition-main} and of the assumption that
	the $\zeta_x$'s are identically distributed and satisfy $\mathbb{E}|\zeta_x|^5<\infty$.
	Therefore, the limit of the left side of \eqref{first-conv} follows.	
	
	Similarly, using the definition of $\boldsymbol{\gamma}^2_{g_x,g_y}(f_{x,y})$
	in \eqref{gamma-0} and the inequalities \eqref{id-n}$-$\eqref{bound-4}, it is proved that
	\begin{align*}
	&\limsup_{n\to\infty}
	{1\over |V_n|^2}
	\sum_{x,y}
	|\boldsymbol{\gamma}^2_{g_x,g_y}(f_{x,y})|
	\leqslant
	\lim_{n\to\infty}
	{\widetilde{C}\over |V_n|^2}
	\sum_{x,y} h_x^2h_y^2
	(\mathbb{E}|\zeta_x|^3 +\mathbb{E}|\zeta_y|+2)=0\,,
	\end{align*}
	where $\widetilde{C}\coloneqq\max\{C_{21}, C_{32}h^2, C_{31}h, C_{13}h\}h^3/2$.
	Again, the last equality follows by hypothesis \eqref{condition-main}  and of the assumption that
	the $\zeta_x$'s are identically distributed and satisfy $\mathbb{E}|\zeta_x|^3<\infty$.
	Then the limit of the right side of \eqref{first-conv} is valid and 
	the proof of lemma is complete.
\end{proof}
%
%
%In order to lighten notation in the next results, we write
%\[
%\begin{matrix}
%\displaystyle
%i_{x,y}
%\coloneqq
%\mathbb{E}\biggl({\partial^4r_{x,y}\over\partial\xi_x^2 \partial\xi_y^2 }\biggr),
%& \displaystyle
%j_{x,y}
%\coloneqq
%\mathbb{E}\biggl({\partial^3r_{x,y}\over\partial\xi_x \partial\xi_y^2 }
%+
%{\partial^3r_{x,y}\over\partial\xi_x^2 \partial\xi_y }\biggr),
%& \displaystyle
%k_{x,y}
%\coloneqq
%\mathbb{E}\biggl({\partial^2r_{x,y}\over \partial\xi_x^2 }
%+
%{\partial^2r_{x,y}\over\partial\xi_y^2 }\biggr),
%\\[0,5cm]
%l_{x,y} \displaystyle
%\coloneqq
%\mathbb{E}\biggl({\partial^2r_{x,y}\over\partial\xi_x \partial\xi_y }\biggr),
%& \displaystyle
%m_{x,y}
%\coloneqq
%\mathbb{E}\biggl({\partial r_{x,y}\over \partial\xi_x }
%+
%{\partial r_{x,y}\over\partial\xi_y }\biggr),
%& \displaystyle
%n_{x,y}
%\coloneqq
%\mathbb{E}r_{x,y}\,,
%\end{matrix}
%\]
%where $r_{x,y}$ is as in \eqref{der-h}.
%

The proof of Proposition \ref{mainlmm} plays 
an important role in the proof of the following result.
\begin{proposition}\label{mainlmm-il-1}
	For any $n$ and any $(\beta,h)\in (0,\infty)^2$, 
	\begin{align*}
	{1\over 4h^2}\!
	\sum_{x,y\in V_n}\!
	h_x^2 h_y^2 \,
	\mathbb{E}
	{\partial^4 F_n\over\partial g_x^2\partial g_y^2} 
	\leqslant 
	2|V_n|
	\sqrt{
		\big(\alpha_n|V_n|^2+\theta(n)\big)h
		+\sum_x\ell^2(x,x)
	}
	-
	\theta_1(n)\,,
	\end{align*}
	with $\alpha_n$ as in Proposition \ref{mainlmm}, $\theta(\cdot)$  as in  Lemma \ref{Lemma-1},
	$\ell(\cdot,\cdot)$ as in Proposition \ref{mainlmm-1} and
	$\theta_1(n)=\sum_{x,y\in V_n}h_x h_y \, \ell(x,y)$.
\end{proposition}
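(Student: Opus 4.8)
The plan is to mimic the proof of Proposition~\ref{mainlmm}: express the fourth mixed derivative of $F_n$ through Gibbs expectations, use the FKG property to bound it \emph{pointwise in the disorder} by a constant multiple of the second mixed derivative $\mpar{F_n}{g_x}{g_y}$, and then recycle the chain of estimates (Cauchy--Schwarz, Proposition~\ref{mainlmm-1}, Minkowski's inequality) already carried out there.

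\textbf{Step 1.} Starting from $\spar{F_n}{g_y}=h^2\big(\smallavg{\sigma_y^2}-\smallavg{\sigma_y}^2\big)$ (as in \eqref{ineq-0}) and differentiating twice in $g_x$ — using repeatedly that $\fpar{}{g_x}\smallavg{A}=h\big(\smallavg{A\sigma_x}-\smallavg{A}\smallavg{\sigma_x}\big)$, together with $\sigma_x^2=\sigma_y^2=1$ — a direct computation gives, for every realization of the disorder,
\[
\frac{\partial^4 F_n}{\partial g_x^2\partial g_y^2}
=2h^4\,\smallavg{\sigma_x;\sigma_y}\big(2\smallavg{\sigma_x}\smallavg{\sigma_y}-\smallavg{\sigma_x;\sigma_y}\big)\,,
\]
where $\smallavg{\sigma_x;\sigma_y}=\smallavg{\sigma_x\sigma_y}-\smallavg{\sigma_x}\smallavg{\sigma_y}$ is the truncated two-point correlation. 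The only mildly delicate point is the bookkeeping of the successive $g_x$-differentiations; intermediate expressions can be sanity-checked against \eqref{eq-r} and against the single-site formulas in \eqref{ineq-0}, bearing in mind that here $g_x$ and $g_y$ are differentiated as independent variables.

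\textbf{Step 2.} By the FKG property of the RFIM (Item~(I)), $\sigma_x$ and $\sigma_y$ being increasing functions of the configuration, $\smallavg{\sigma_x;\sigma_y}\geqslant0$ for every disorder realization. Hence $2\smallavg{\sigma_x}\smallavg{\sigma_y}-\smallavg{\sigma_x;\sigma_y}\leqslant 2\smallavg{\sigma_x}\smallavg{\sigma_y}\leqslant 2$, and multiplying by the nonnegative factor $\smallavg{\sigma_x;\sigma_y}$ and invoking \eqref{eq-r} (note $\mpar{F}{g_x}{g_y}=\mpar{F_n}{g_x}{g_y}$ since $\mathbb{E}F_n$ does not depend on $g$) we obtain, for every disorder realization,
\[
\frac{\partial^4 F_n}{\partial g_x^2\partial g_y^2}\leqslant 4h^4\,\smallavg{\sigma_x;\sigma_y}=4h^2\,\mpar{F_n}{g_x}{g_y}\,.
\]
Since $h_x^2h_y^2\geqslant0$, multiplying by $h_x^2h_y^2$, taking expectations over the disorder, and summing yields
\[
\frac{1}{4h^2}\sum_{x,y\in V_n}h_x^2h_y^2\,\mathbb{E}\frac{\partial^4 F_n}{\partial g_x^2\partial g_y^2}
\;\leqslant\;\sum_{x,y\in V_n}h_x^2h_y^2\,\mathbb{E}\,\mpar{F_n}{g_x}{g_y}\,.
\]

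\textbf{Step 3.} It then suffices to bound the right-hand side by $2|V_n|\sqrt{(\alpha_n|V_n|^2+\theta(n))h+\sum_x\ell^2(x,x)}-\theta_1(n)$, and this is exactly the intermediate estimate already obtained in the proof of Proposition~\ref{mainlmm}. Indeed, writing $F=F_n-\mathbb{E}F_n$ and $h_x^2h_y^2\,\mathbb{E}\mpar{F}{g_x}{g_y}=h_xh_y\big(h_xh_y\,\mathbb{E}\mpar{F}{g_x}{g_y}+\ell(x,y)\big)-h_xh_y\,\ell(x,y)$, summing, recognising $\sum_{x,y}h_xh_y\,\ell(x,y)=\theta_1(n)$, and then applying Cauchy--Schwarz with $\sum_{x,y}h_x^2h_y^2=(\sum_x h_x^2)^2\leqslant|V_n|^2$, Proposition~\ref{mainlmm-1}, and Minkowski's inequality (using $\sum_x(h_x^2+\ell(x,x))^2\leqslant 2|V_n|+2\sum_x\ell^2(x,x)$ and $\alpha_n|V_n|^2=(Ch+h^{-1})|V_n|$ to absorb the constants) gives precisely this bound; equivalently, the asserted estimate is $h^2|V_n|^2/2$ times the core inequality in the proof of Proposition~\ref{mainlmm}. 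Combining it with Step~2 completes the proof. I expect Step~1 to be the main obstacle — obtaining the clean closed form and, above all, noticing that FKG upgrades the a priori bound $|2\smallavg{\sigma_x}\smallavg{\sigma_y}-\smallavg{\sigma_x;\sigma_y}|\leqslant 4$ to $\leqslant 2$, which is exactly what makes the prefactor $\tfrac{1}{4h^2}$ come out right; everything after the reduction to $\sum_{x,y}h_x^2h_y^2\,\mathbb{E}\mpar{F_n}{g_x}{g_y}$ is a verbatim repetition of estimates already proved.
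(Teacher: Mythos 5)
Your proof is correct and follows essentially the same route as the paper: the pointwise inequality $\frac{\partial^4 F_n}{\partial g_x^2\partial g_y^2}\leqslant 4h^2\,\frac{\partial^2 F_n}{\partial g_x\partial g_y}$ obtained from an explicit identity plus the FKG positivity of $\langle\sigma_x;\sigma_y\rangle$, combined with the intermediate estimate $\sum_{x,y}h_x^2h_y^2\,\mathbb{E}\frac{\partial^2 F_n}{\partial g_x\partial g_y}\leqslant 2|V_n|\sqrt{(\alpha_n|V_n|^2+\theta(n))h+\sum_x\ell^2(x,x)}-\theta_1(n)$ already extracted as a by-product of the proof of Proposition \ref{mainlmm}. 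As a minor aside, your closed form $2h^4\langle\sigma_x;\sigma_y\rangle\big(2\langle\sigma_x\rangle\langle\sigma_y\rangle-\langle\sigma_x;\sigma_y\rangle\big)$ is the correct identity, whereas the paper's displayed version replaces the term $\tfrac12\langle\sigma_x;\sigma_y\rangle$ by $\tfrac12$; this does not affect the resulting bound.
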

\begin{proof}
	As a sub-product of the proof of Proposition \ref{mainlmm} we have
	\begin{multline}\label{1-ident}
	\sum_{x,y}
	h_x^2 h_y^2 \,
	\mathbb{E}
	{\partial^2 F_n\over\partial g_x\partial g_y} 
	\leqslant 
	2|V_n|
	\sqrt{
		\big(\alpha_n|V_n|^2+\theta(n)\big)h
		+\sum_x\ell^2(x,x)
	}
	-
	\theta_1(n)\,,
	%\nonumber
	\end{multline}
	with
	$
	\ell(x,y)
	=
	(h_x h_y)^{-1}\, \boldsymbol {\gamma}^2_{g_x,g_y}(f_{x,y}).
	$
	
	On the other hand, we claim that
	\begin{align}\label{2-ident}
	{\partial^4 F_n\over\partial g_x^2\partial g_y^2} 
	\leqslant 
	4h^2 
	{\partial^2 F_n\over\partial g_x\partial g_y}\,.
	\end{align}
	Indeed, a straightforward computation shows that
	$
	{\partial\smallavg{\sigma_x;\sigma_y}\over \partial g_x}
	=
	-2h\,\langle\sigma_x\rangle \smallavg{\sigma_x;\sigma_y}\,.
	$
	Then, using the identity \eqref{eq-r}, we have 
	\[
	{\partial^4 F_n\over\partial g_x^2\partial g_y^2} 
	=
	4h^2 
	\Big(
	\langle\sigma_x\rangle \langle\sigma_y\rangle
	-
	{1\over 2}
	\Big)\,
	{\partial^2 F_n\over\partial g_x\partial g_y}\,.
	\]
	Since $\langle\sigma_x\rangle\leqslant 1$ for all $x$, and 
	the derivative 
	${\partial^2 F_n\over\partial g_x\partial g_y}$ is non-negative, the claim follows.
	
	Finally, combining \eqref{1-ident} and \eqref{2-ident} the proof of proposition follows.
\end{proof}

	Next, define
	\[
	\Delta_n=\Delta_n(\sigma)\coloneqq {1\over |V_n|} \sum_{x\in V_n} g_x \sigma_x\, .
	\]
That is, $\Delta_n$ is the part of the energy due to the disorder.
\begin{remark}
	Note that the absolute value  of covariance between $\Delta_n(\sigma^l)$ and $\Delta_n(\sigma^s)$ 
	doesn't grow faster than the absolute value of overlap $R_{l,s}$ \eqref{overlap} between two replicas 
	$\sigma^l$, $\sigma^s$.
\end{remark}

Recall the set ${\mathcal A}$ defined in \eqref{remark-main-1}. 
Let ${\mathcal A}^c$ denote the complement of ${\mathcal A}$ in the cartesian plane. 
\begin{remark}\label{hnlmm}
	Since $\langle \Delta_n \rangle={\partial \psi_n\over \partial h}$, 
	$\nu(\Delta_n)={\partial p_n\over \partial h}$,
	${\rm Var}(F_{n})\leqslant (Ch|V_n|+\theta(n))h$ $($ see Lemma \ref{Lemma-1}$)$ and 
	$p = \lim_{n\ra\infty} p_{n}$ exists and is finite for all $(\beta,h)$, 
	by convexity arguments of the function $h\mapsto\psi_n(\beta,h)$ it follows that:	
	for any $(\beta, h)\in {\mathcal A}^c$, 
	\begin{align*}
	\nu(\Delta_n)
	\nto
	\fpar{p}{h}(\beta,h),
	\quad 
	\mathbb{E}|\smallavg{\Delta_n}-\nu(\Delta_n)| 
	\nto 0\, .
	\end{align*}
	For more details see
	Lemma 2.7 in \cite{chatterjee2015absence}.
\end{remark}
%
%\begin{proposition}\label{prop-cont} If $F_n$ is as in \eqref{log}, then
%	\[
%	{\partial^4 F_n\over\partial g_x^2\partial g_y^2} 
%	\leqslant 
%	4h^2 
%	{\partial^2 F_n\over\partial g_x\partial g_y}\,.
%	\]
%\end{proposition}
%\begin{proof}
%
%\end{proof}

\begin{lemma}\label{lemma-main}
	For any $(\beta,h)\in {\mathcal A}^c$,
	\[
	\nu\big(|\Delta_n - \nu(\Delta_n) | \big)  \nto 0\, .
	\]
\end{lemma}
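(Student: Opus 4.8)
The plan is to split the deviation of $\Delta_n$ from its full average into the deviation of the Gibbs average $\smallavg{\Delta_n}$ and the Gibbs fluctuation of $\Delta_n$ around it. By the triangle inequality,
\[
\nu\big(|\Delta_n-\nu(\Delta_n)|\big)
\leqslant
\nu\big(|\Delta_n-\smallavg{\Delta_n}|\big)
+
\mathbb{E}\big|\smallavg{\Delta_n}-\nu(\Delta_n)\big|\,.
\]
For $(\beta,h)\in{\mathcal A}^c$ the last term tends to $0$ by Remark~\ref{hnlmm} (the convexity argument modelled on Lemma 2.7 of \cite{chatterjee2015absence}), so everything reduces to the first term, and by Jensen's inequality
\[
\nu\big(|\Delta_n-\smallavg{\Delta_n}|\big)
\leqslant
\sqrt{\nu\big((\Delta_n-\smallavg{\Delta_n})^2\big)}
=
\sqrt{\mathbb{E}\big(\smallavg{\Delta_n^2}-\smallavg{\Delta_n}^2\big)}\,.
\]
It therefore suffices to prove that the averaged Gibbs variance $\mathbb{E}(\smallavg{\Delta_n^2}-\smallavg{\Delta_n}^2)$ vanishes as $n\to\infty$.

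Expanding the square, for every realization of the disorder one has $\smallavg{\Delta_n^2}-\smallavg{\Delta_n}^2={1\over|V_n|^2}\sum_{x,y\in V_n} g_x g_y\,\smallavg{\sigma_x;\sigma_y}$, so that $\mathbb{E}(\smallavg{\Delta_n^2}-\smallavg{\Delta_n}^2)=|V_n|^{-2}\sum_{x,y}\mathbb{E}[g_x g_y\,\smallavg{\sigma_x;\sigma_y}]$. I would then apply the generalized Gaussian integration by parts \eqref{tcf}, which replaces $\sum_{x,y}\mathbb{E}[g_x g_y\,\smallavg{\sigma_x;\sigma_y}]$, up to an error $o(|V_n|^2)$, by $\sum_{x,y}\mathbb{E}g_x^2\,\mathbb{E}g_y^2\,\mathbb{E}\,\mpar{\smallavg{\sigma_x;\sigma_y}}{g_x}{g_y}$. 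Since $\mathbb{E}g_x^2=h_x^2$ and, by \eqref{eq-r}, $\smallavg{\sigma_x;\sigma_y}=h^{-2}\mpar{F_n}{g_x}{g_y}$ (hence $\mpar{\smallavg{\sigma_x;\sigma_y}}{g_x}{g_y}=h^{-2}\,\partial^4 F_n/(\partial g_x^2\partial g_y^2)$), this gives
\[
\mathbb{E}\big(\smallavg{\Delta_n^2}-\smallavg{\Delta_n}^2\big)
=
{1\over h^2|V_n|^2}\sum_{x,y\in V_n} h_x^2 h_y^2\,
\mathbb{E}{\partial^4 F_n\over\partial g_x^2\partial g_y^2}
+o(1)\,.
\]

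The remaining sum is controlled by Proposition~\ref{mainlmm-il-1}, which bounds ${1\over 4h^2}\sum_{x,y}h_x^2 h_y^2\,\mathbb{E}\,\partial^4 F_n/(\partial g_x^2\partial g_y^2)$ by $2|V_n|\sqrt{(\alpha_n|V_n|^2+\theta(n))h+\sum_x\ell^2(x,x)}-\theta_1(n)$. From the proof of Lemma~\ref{mainlmm-il} (the convergences \eqref{first-conv}--\eqref{second-conv}) we have $\theta(n)=o(|V_n|^2)$, $\theta_1(n)=\sum_{x,y}\boldsymbol{\gamma}^2_{g_x,g_y}(f_{x,y})=o(|V_n|^2)$ and $\sum_x\ell^2(x,x)=o(|V_n|^2)$, while $\alpha_n=o(1)$ by Proposition~\ref{mainlmm}; hence the argument of the square root is $o(|V_n|^2)$, the whole bound is $2|V_n|\cdot o(|V_n|)-o(|V_n|^2)=o(|V_n|^2)$, and dividing by $|V_n|^2$ yields $\mathbb{E}(\smallavg{\Delta_n^2}-\smallavg{\Delta_n}^2)\leqslant o(1)$. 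Because $\smallavg{\Delta_n^2}-\smallavg{\Delta_n}^2\geqslant 0$ pointwise (it is a variance under the Gibbs measure), this one-sided estimate already forces $\mathbb{E}(\smallavg{\Delta_n^2}-\smallavg{\Delta_n}^2)\to 0$, which together with the two displays at the start proves the lemma. The one point that needs care is precisely this: Proposition~\ref{mainlmm-il-1} is only an upper bound — it rests on the FKG positivity $\mathbb{E}\smallavg{\sigma_x;\sigma_y}\geqslant 0$ — so it is the non-negativity of the Gibbs variance, together with a careful accounting of the signs of $\theta_1(n)$ and of the error terms, that upgrades it to genuine convergence.
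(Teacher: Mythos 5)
Your proposal is correct and follows essentially the same route as the paper: the triangle inequality plus Remark \ref{hnlmm} reduces the claim to showing $\mathbb{E}(\smallavg{\Delta_n^2}-\smallavg{\Delta_n}^2)\to 0$, which is obtained exactly as in the paper by the generalized integration by parts applied to $\smallavg{\sigma_x;\sigma_y}$, the identity \eqref{eq-r}, Proposition \ref{mainlmm-il-1}, and the $o(|V_n|^2)$ estimates on the remainder terms. Your closing observation that the one-sided bound suffices because the Gibbs variance is pointwise non-negative is a correct (and slightly more explicit) rendering of the step the paper leaves implicit.
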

\begin{proof}
	Let $\langle\,\cdot\,\rangle_{g_x=u, g_y=v}$ be the Gibbs expectation defined by setting 
	$g_x$ and $g_y$ in $\langle\,\cdot\,\rangle$ to be $u$ and $v$ respectively, and
	$F^*_{x,y}(u,v)\coloneqq \smallavg{\sigma_x;\sigma_y}_{g_x=u,g_y=v}$.	
	A generalized  Gaussian integration by parts (see Proposition \ref{prop-cv} in Appendix),
	with $f^*_{x,y}(u,v)=\mathbb{E}F^*_{x,y}(u,v)$, gives 
	\begin{align*}
	\mathbb{E}g_xg_yf^*_{x,y}
	=
	h_x^2 h_y^2\,
	\mathbb{E}\mpar{f^*_{x,y}}{g_x}{g_y}
	+
	\boldsymbol{\gamma}^2_{g_x,g_y}(f^*_{x,y})\,,
	\end{align*}
	where $\boldsymbol{\gamma}^2_{g_x,g_y}(f^*_{x,y})$
	is defined analogously as in \eqref{gamma-0} with
	$f^*_{x,y}$ instead $f_{x,y}$.
	Dividing this equality by $|V_n|^2$ and summing over all $x,y\in V_n$, and using \eqref{eq-r},
	we obtain
	\begin{align}\label{non-ide}
	\mathbb{E}\big(\smallavg{\Delta_n^2}-\smallavg{\Delta_n}^2\big)
	%&
	=
	%{1\over |V_n|^2}
	%		\sum_{x,y\in V_n}
	%		\mathbb{E}\big(g_xg_y\,
	%		\smallavg{\sigma_x;\sigma_y}
	%		\big)     \nonumber 
	%		\\[0,2cm] 
	%		&=
	%{1\over |V_n|^2}
	%		\sum_{x,y\in V_n}
	%		\left(
	%		h_x^2 h_y^2\,
	%		\mathbb{E}\biggl(\mpar{\smallavg{\sigma_x;\sigma_y}}{g_x}{g_y}\biggr)
	%		+
	%\boldsymbol{\gamma}^2_{g_x,g_y}(F^*)
	%		\right) \nonumber
	%		\\[0,2cm] 
	%		&\stackrel{\eqref{eq-r}}{=}
	{1\over h^2|V_n|^2}
	\sum_{x,y}
	\Big(
	h_x^2 h_y^2\,
	\mathbb{E}{\partial^4 F_n\over\partial g_x^2\partial g_y^2} 
	+
	h^2
	\boldsymbol{\gamma}^2_{g_x,g_y}(f^*_{x,y})
	\Big)				
	\,.
	\end{align}
	By Proposition \ref{mainlmm-il-1}, the expression \eqref{non-ide} is at most
	%\begin{align*}
	%		{4\over |V_n|^2}
	%		\sum_{x,y\in V_n}
	%		\left(
	%		h_x^2 h_y^2\,
	%		\mathbb{E}\biggl({\partial^2 F_n\over\partial g_x\partial g_y}\biggr)
	%		+
	%		\boldsymbol{\gamma}^2_{g_x,g_y}(F^*)
	%		\right)\,.
	%\end{align*}
	%As a sub-product of the proof of Proposition \ref{mainlmm}, this is
	\begin{align*}
	8\sqrt{\Big(\alpha_n+{\theta(n)\over |V_n|^2}\Big)h+ {1\over |V_n|^2}\sum_x \ell^2(x,x)}
	-
	{4\theta_1(n)\over |V_n|^2}
	+
	{1\over |V_n|^2}\sum_{x,y} \gamma^2_{g_x,g_y}(f^*_{x,y})\,,
	%
	%		4
	%		\left(\!
	%		{\sqrt{2 C}+1\over \sqrt{|V_n|}}
	%		+\!
	%		\sqrt{{1\over|V_n|^2}\sum_{x} \ell^2(x,x)}
	%		\right)\!
	%		-
	%		{1\over |V_n|^2}\!
	%		\sum_{x,y}
	%		\big(4h_x h_y \, \ell(x,y)
	%		-
	%		\boldsymbol{\gamma}^2_{g_x,g_y}(f^*_{x,y})\big)\,,
	\end{align*}
	where $\ell(x,y)=(h_x h_y)^{-1}\,\boldsymbol {\gamma}^2_{g_x,g_y}(F)$.
	That is,
	\begin{align}\label{ineq-e}
	&\mathbb{E}\big(\smallavg{\Delta_n^2}-\smallavg{\Delta_n}^2\big)
	\\[0,1cm]
	&\leqslant
	8\sqrt{\Big(\alpha_n+{\theta(n)\over |V_n|^2}\Big)h+ {1\over |V_n|^2}\sum_x \ell^2(x,x)}
	-
	{4\theta_1(n)\over |V_n|^2}
	+
	{1\over |V_n|^2}\sum_{x,y} \gamma^2_{g_x,g_y}(f^*_{x,y})
	\,. \nonumber
	\end{align}
	Items \eqref{first-conv} and \eqref{second-conv} show that 
	$\theta(n)=o(|V_n|^2)$, $\theta_1(n)=o(|V_n|^2)$ and $\sum_{x} \ell^2(x,x)$
	$ =o(|V_n|^2)$.
	Analogously to the proof of Items \eqref{first-conv}-\eqref{second-conv}, using 
	Items \eqref{second-id-1}$-$\eqref{second-id-3}  
	of Proposition \ref{prop-cheng} in Appendix, it is verified that
	$
	\sum_{x,y}
	\boldsymbol{\gamma}^2_{g_x,g_y}(f^*_{x,y})
	=
	o(|V_n|^2).
	$
	Therefore, since $\alpha_n=o(1)$ (see Proposition \ref{mainlmm}), $\mathbb{E}\big(\smallavg{\Delta_n^2}-\smallavg{\Delta_n}^2\big)=o(1)$.
	
	Combining \eqref{ineq-e} with the inequality
	\[
	\nu(|\Delta_n-\langle \Delta_n\rangle|)
	\leqslant
	\sqrt{\mathbb{E}(\langle \Delta_n^2\rangle-\langle \Delta_n\rangle^2)}\,,
	\]
	and after using the convergences mentioned above one finds that
	$\nu\big(|\Delta_n-\langle \Delta_n\rangle|\big)$ converges to 0 as $n\to \infty$. 
	Since $(\beta,h)\in {\mathcal A}^c$, the proof follows by Remark \ref{hnlmm}.
\end{proof}
	\begin{remark}
		In \cite{Auffinger-other}, \cite{PANCHENKO2010189} and \cite{panchenko2013sherrington}, 
		by using different techniques, the authors proved the following general result
		\[
		\textstyle
		\nu\big(\big|{H_n\over|V_n|} - \nu({H_n\over|V_n|}) \big| \big)  \nto 0\,,
		\]
		where $H_n$ is the Hamiltonian \eqref{hamiltonian} of the RFIM.
		Under an ergodic or mixing hypothesis a straightforward computation shows that this result implies our Lemma \ref{lemma-main}.
	\end{remark}
\begin{remark}
	Taking $h_x=\pm 1$ and $\zeta_x\sim N(0,1)$ for all $x$, in Lemma \ref{lemma-main}, 
	%notice that the $\gamma$'s in the proof of Lemma \ref{lemma-main} disappear 
	%by Gaussian integration by parts. Then, 
	we recovered the proof of Lemma 2.9 in 
	\cite{chatterjee2015absence} 
	using classic inequalities and the essential inequality \eqref{2-ident}
	instead 
	of using the Hermite polynomials, as was done in Lemma 2.8 of \cite{chatterjee2015absence}.
\end{remark}
%
%
%
%\subsection{Proof of Theorem \ref{rsbthm}}
%
Take any integer $m\geqslant 2$ and let $\sigma^1,\ldots, \sigma^m, \sigma^{m+1}$ 
denote $m+1$ spin configurations drawn independently from the Gibbs measure.  
Let $R_{l,s}$ the overlap between $\sigma^l$ and $\sigma^s$ defined in \eqref{overlap}, 
with $l,s=1,\ldots,m+1$. 
Let $f:\mathbb{R}^{m(m-1)/2}\ra[-1,1]$ be a bounded measurable function
of these overlaps that not change with $n$.
% i.e., 
%$f=f(\mathcal{R}_m):\mathbb{R}^{m(m-1)/2}\ra[-1,1]$,
%where 
%$\mathcal{R}_m$ denotes the collection $(R_{l,s})_{1\leqslant l<s \leqslant m}$. 
%This result shows that the RFIM satisfies the 
%{Ghirlanda-Guerra identities} (cf. \cite{aizenmancontucci98,ghirlanda1998general} )
%%(also sometimes called the {\it Aizenman-Contucci identities} \cite{aizenmancontucci98}) 
%at almost all $(\beta,h)$. Given a result 
%like Lemma \ref{hnlmm}-{\it (c)}, the derivation of the Ghirlanda-Guerra identities is quite standard \cite{talagrand2003spin}.
%
%\newpage
\begin{lemma}[Ghirlanda-Guerra identities]\label{ggid}
	Consider the RFIM defined by the Gibbs measure in \eqref{gibbs-measure2}. Then,
	the identity \eqref{G-G}
	is satisfied at almost all $(\beta,h)$. That is, if $f$ is as above, 
	\begin{align*} 
	\nu(f R_{1,m+1})
	- 
	\frac{1}{m}  \,
	\nu(f)
	\nu(R_{1,2})
	- 
	\frac{1}{m}\,
	\sum_{s=2}^m \nu(f R_{1,s})
	\nto 0\,,
	%		\quad \forall (\beta,h)\in {\mathcal A}^c\,.
	\end{align*}
	for each $(\beta,h)$ in ${\mathcal A}^c$.
\end{lemma}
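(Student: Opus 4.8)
The plan is to derive the Ghirlanda--Guerra identities \eqref{G-G} from the approximate identity \eqref{1-appx} together with the self-averaging of $\Delta_n$ established in Lemma \ref{lemma-main}. First I would start from the generalized Gaussian integration by parts \eqref{gul}, applied to the test function $\sigma_x^1 f$ where $f = f\big((R_{l,s})_{1\leqslant l,s\leqslant m}\big)$. The key computation is to identify $\sum_x \mathbb{E}g_x^2\,\mathbb{E}\,\partial\langle \sigma_x^1 f\rangle/\partial g_x$ explicitly: differentiating the Gibbs average brings down a factor $h(\sigma_x^1 - \langle\sigma_x^1\rangle)$ from the Hamiltonian acting on the distinguished replica, plus contributions $-h\langle\sigma_x^s\rangle$ from each of the other $m$ replicas appearing in $f$ through the overlaps $R_{1,s}, \ldots$; since $\mathbb{E}g_x^2 = h_x^2$ and the overlap \eqref{overlap} is weighted by exactly $\mathbb{E}g_x^2 = h_x^2$ in the off-diagonal entries, summing over $x$ and dividing by $|V_n|$ reconstructs the combination $h\big(\sum_{s=1}^m R_{1,s} - m R_{1,m+1}\big)$ after relabeling the new replica $\sigma^{m+1}$ (this is the standard cavity computation, done exactly as in \cite{chatterjee2015absence} but now with the weights $h_x^2$ making it go through). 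This yields \eqref{1-appx}: $\nu(\Delta_n(\sigma^1) f) - h\,\nu\big((\sum_{s=1}^m R_{1,s} - m R_{1,m+1})f\big) = O(|V_n|)$, i.e. the difference divided by $|V_n|$ is $o(1)$ — wait, more precisely the error from \eqref{gul} is $o(|V_n|)$, and after dividing by $|V_n|$ the whole relation holds up to $o(1)$.

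Next I would replace $\Delta_n(\sigma^1)$ by its disorder average. By Lemma \ref{lemma-main}, $\nu(|\Delta_n - \nu(\Delta_n)|) \to 0$, and since $|f| \leqslant 1$ this gives $|\nu(\Delta_n(\sigma^1) f) - \nu(\Delta_n)\nu(f)| \leqslant \nu(|\Delta_n - \nu(\Delta_n)|) \to 0$. Here I use that $\Delta_n(\sigma^1)$ depends only on the configuration $\sigma^1$ and the disorder, while under $\nu$ the other replicas are integrated out against the same Gibbs measure, so the bound $\nu(|\Delta_n(\sigma^1) - \nu(\Delta_n)|\,|f|) \leqslant \nu(|\Delta_n - \nu(\Delta_n)|)$ is just the tower property and $|f|\leqslant 1$. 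Combining with the previous paragraph, and using the special case $f \equiv 1$, $m=1$ of \eqref{1-appx} which gives $\nu(\Delta_n) = h(1 - \nu(R_{1,2})) + o(1)$ (note $R_{1,1} = 1$), I substitute $\nu(\Delta_n) \to h(1 - \nu(R_{1,2}))$ to obtain
\[
h\,\nu(R_{1,m+1}\,f) - h\,\nu\Big(\Big(\nu(R_{1,2}) + \sum_{s=2}^m R_{1,s} - m R_{1,m+1}\Big)f\Big) \nto 0,
\]
where I have separated the $s=1$ term $R_{1,1}=1$ from the sum and matched it against the $1$ coming from $\nu(\Delta_n)/h$. Dividing by $h$ (valid since $h > 0$ is fixed) and rearranging, $m\,\nu(R_{1,m+1} f) - \nu(R_{1,2})\nu(f) - \sum_{s=2}^m \nu(R_{1,s} f) \to 0$; dividing by $m$ gives exactly \eqref{G-G} for $(\beta,h) \in \mathcal{A}^c$.

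The main obstacle I expect is the bookkeeping in the cavity computation that produces \eqref{1-appx}: one must carefully track how $\partial\langle \sigma_x^1 f\rangle/\partial g_x$ decomposes — the derivative hits both the explicit $\sigma_x^1$ inside the Gibbs weight of replica $1$ and the overlaps $R_{1,s}$ inside $f$ (each of which contains a $\sigma_x^s$ term from replica $s$) — and then verify that the $\mathbb{E}g_x^2 = h_x^2$ weighting is precisely what is needed so that $\sum_x h_x^2 \langle \sigma_x^1 \sigma_x^s \cdots\rangle / |V_n|$ collapses to a term involving $R_{1,s}$ rather than an unweighted sum. This is exactly the point where the definition \eqref{overlap} of the generalized overlap (with the $(\mathbb{E}g_x^2)^{1-\delta_{l,s}}$ weight) was engineered to make the identities close; the rest is then routine given Lemma \ref{lemma-main} and \eqref{gul}. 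A secondary technical point is that \eqref{1-appx} and \eqref{gul} are stated only for $f$ that does not change with $n$, which is consistent with the hypothesis on $f$ here, so no issue arises there.
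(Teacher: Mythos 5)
Your proposal is correct and follows essentially the same route as the paper: a generalized Gaussian integration by parts applied to $\langle\,\sigma_x^1 f\,\rangle$ (Proposition \ref{prop-cheng}, with the error terms $\boldsymbol{\gamma}^2_{g_x}(f_x)$ killed by \eqref{condition-main}) yields \eqref{prr-n}, and then Lemma \ref{lemma-main} together with the $f=1$, $m=1$ case closes the identities — the only difference being that the paper delegates this final rearrangement to Talagrand's Section 2.12 while you spell it out. One small slip: in your intermediate display the stray leading term $h\,\nu(R_{1,m+1}f)$ makes the coefficient of $\nu(R_{1,m+1}f)$ come out as $m+1$ rather than $m$, but the final rearranged identity you state is the correct one.
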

\begin{proof}
	Let $\langle\,\cdot\,\rangle_{g_x=u}$ be the Gibbs expectation defined by setting 
	$g_x$ in $\langle\,\cdot\,\rangle$ to be $u$ and
	$F_x(u)\coloneqq \langle\,\sigma_x^1 \, f\,\rangle_{g_x=u}$.
	%	Note that $F_x$ have bounded continuous third-order derivatives.
	Using \eqref{en-int},
	a  straightforward calculus show that 
	\begin{align}\label{in-princ}
	{\partial^j F_x(u)\over \partial u^j}
	=
	h^j\,
	\biggl\langle\sigma_x^{1} \boldsymbol{\cdot} \Big(\sum_{s=1}^m\sigma_{x}^s-m\sigma_x^{m+1}\Big)^j f \biggr\rangle_{g_x=u},
	\quad j=1,2,\ldots\,.
	\end{align}
	A generalized Gaussian  integration by 
	parts (see Proposition \ref{prop-cheng} in Appendix), with 
	$f_x(u)\coloneqq \mathbb{E}F_x(u)$, gives
	\begin{align*}
	\mathbb{E}g_x f_x - h_x^2\,\mathbb{E}\dfrac{\text{d} f_x}{\text{d} g_x}
	=
	\boldsymbol{\gamma}^2_{g_x}(f_x)\,,
	\end{align*}
	where 
	$
	\boldsymbol{\gamma}^2_{g_x}(f_x)
	=
	\mathbb{E}\big(
	g_x\int_0^{g_x}(g_x-u) \,{\text{d}^2 f_x(u)\over\text{d} u^2} \,\text{d}u\big)
	-
	h_x^2 \,
	\mathbb{E}\big(
	\int_0^{g_x}(g_x-u) \,{\text{d}^3 f_x(u)\over\text{d} u^3} \,\text{d}u\big).
	$
	Dividing the above equality by $|V_n|$ and summing over all $x\in V_n$, and using \eqref{in-princ}
	with $j=1$, we have	
	\begin{align}\label{prr-n}
	\nu\big(\Delta_n(\sigma^1) f\big)
	- 
	h\,\nu\biggl(\Big(\sum_{s=1}^{m}R_{1,s}-mR_{1,m+1}\Big)f \biggr)	 
	%\\[0,2cm]
	%&= 
	%\nu\big(\Delta_n(\sigma^1) f\big)
	%-
	%{h\over |V_n|}
	%\sum_{x\in V_n} h_x^2 \,
	%\nu\biggl(
	%\sigma_x^{1} \boldsymbol{\cdot} \Big(\sum_{s=1}^m\sigma_{x}^s-m\sigma_x^{m+1}\Big) f
	%\biggr)
	=
	{1\over |V_n|}
	\sum_{x\in V_n} 
	\boldsymbol{\gamma}^2_{g_x}(f_x)	\, .
	\end{align}
	Since $|{\partial^j F_x(u)\over \partial u^j}|\leqslant(2mh)^j\|f\|_\infty$, 
	%(see Proposition \ref{prop-cheng} in Appendix)
	it follows that
	$
	\big|\int_0^{g_x}(g_x-u) \, {\text{d}^3 f_x(u)\over\text{d} u^3} \,\text{d}u\big|
	\leqslant 
	4(mh)^3 \|f\|_\infty h_x^2 \zeta_x^2
	$
	and, by Item \eqref{second-id} of Proposition \ref{prop-cheng}, that
	\begin{align*}
	\textstyle
	\Big| g_x\int_{0}^{g_x}(g_x-u)\, {\text{d}^2 f_x(u)\over\text{d} u^2} \,\text{d}u \Big|
	%		\leqslant
	%		|g_x|
	%		\int_0^{|g_x|}
	%		\Big\|{\text{d}^2 f_x\over\text{d} u^2}\Big\|_\infty u 
	%		\,\text{d}u
	\leqslant
	2(mh)^2\|f\|_\infty |h_x|^3 |\zeta_x|^3 \, .
	\end{align*}
	Then,
	\begin{align*}
	& \textstyle
	\limsup_{n\to\infty}
	\sup_{f}
	\Big|
	{1\over |V_n|}
	\sum_{x\in V_n}
	\boldsymbol{\gamma}^2_{g_x}(F_x)
	\Big|
	\\[0,1cm]
	& \textstyle
	\leqslant 
	\lim_{n\to\infty}
	{(1+mh)(2mh)^2\|f\|_\infty\over |V_n|}
	\sum_{x\in V_n} 
	|h_x|^3
	\mathbb{E}|\zeta_x|^3=0\, .
	\end{align*}
	Here, the last equality follows from \eqref{condition-main} and of the assumption that
	the $\zeta_x$'s are identically distributed and satisfy $\mathbb{E}|\zeta_x|^3<\infty$.
	Therefore, in \eqref{prr-n}, follows that
	\begin{align}\label{approx}
	\limsup_{n\to\infty}
	\sup_{f}
	\left|
	\nu\big(\Delta_n(\sigma^1) f\big)
	-
	h\,\nu\biggl(\Big(\sum_{s=1}^{m}R_{1,s}-mR_{1,m+1}\Big)f \biggr)	
	\right|=0\,.
	\end{align}
	Since 
	$
	\nu\big(|\Delta_n - \nu(\Delta_n) | \big)  \nto 0
	$
	for any $(\beta,h)\in \mathcal{A}^c$
	(see Lemma \ref{lemma-main}), it is well-known 
	(see e.g. \cite{talagrand2003spin}, Section 2.12) that 
	\eqref{approx} is sufficient to guarantee  the validity of 
	the Ghirlanda-Guerra identities \eqref{G-G}.
	The proof of lemma is complete.
\end{proof}
%
%
%is sufficient to ensure the validity of the identity
%
\vspace*{0,3cm}
\noindent
{\it Proof of Theorem \ref{rsbthm}.}
The proof follows the same path as in \cite{chatterjee2015absence} and we present it  
for the sake of completeness.	
Let $q_{\beta, h,n}\coloneqq \nu(R_{1,2})$.
Taking $f=1$ and $m=1$ in \eqref{prr-n} we obtain
\begin{align}\label{ehn}
\textstyle
\nu\big(\Delta_n(\sigma^1)\big)
= 
h (1-q_{\beta, h,n})
+
\frac{1}{|V_n|}
\sum_x 
\boldsymbol{\gamma}^2_{g_x}(f_x),
\quad
%	\text{where} \
\frac{1}{|V_n|}
\sum_x 
\boldsymbol{\gamma}^2_{g_x}(f_x)\nto 0\, . 
\end{align}
On the other hand,
choosing $m=2$ and $f = R_{1,2}$ in Lemma \ref{ggid} gives 
\begin{equation}\label{gg1}
\nu(R_{1,2}R_{1,3})
- 
\frac{1}{2}  
q_{\beta, h,n}^2
- 
\frac{1}{2}
\nu(R_{1,2}^2)
\nto 0\, . 
\end{equation}
Choosing $m=3$ and $f= R_{2,3}$ gives
\begin{align}\label{gg2}
\nu(R_{2,3} R_{1,4}) 
- 
\frac{1}{3}  
q_{\beta, h,n}^2
- 
\frac{1}{3}
\sum_{s=2}^{3}
\nu(R_{2,3}R_{1,s})
\nto 0\, . 
\end{align}
By symmetry between replicas,
$
\nu(R_{2,3}R_{1,2})
= 
\nu(R_{2,3}R_{1,3})
= 
\nu(R_{1,2}R_{1,3}) \, ,
$
then,
we can multiply \eqref{gg1} by ${2/ 3}$ and add to \eqref{gg2} to get
\begin{align}\label{limit-final}
{2\over 3}
\big(
\nu(R_{1,2}^2)
-
q_{\beta, h,n}^2
\big)
-
\mathbb{E}\big(\smallavg{R_{1,2}^2}
- 
\smallavg{R_{2,3}R_{1,4}} 
\big)
\nto 0\, .
\end{align}
Seeing that the sequence $(\sigma^{l})$ is an independent sequence under
Gibbs' measure,
$
\smallavg{R_{2,3}R_{1,4}} 
= 
{1\over |V_n|^2}\sum_{x,y\in V_n}h_x^2 h_y^2 
\, \langle\sigma^2_x \sigma^3_x \sigma^1_y \sigma^4_y\rangle
%\\[0,1cm]
%&=
%{1\over |V_n|^2}\sum_{x,y\in V_n}h_x^2 h_y^2 
%\, \langle\sigma_x\rangle^2 \langle\sigma_y\rangle^2
=
\smallavg{R_{1,2}}^2\,.
$
Combining this with \eqref{limit-final} and after using
Lemma \ref{mainlmm-il} we have
$
\nu(R_{1,2}^2)-q_{\beta, h,n}^2\nto 0.
$
Therefore,
\begin{align}\label{lim-1}
\nu\big((R_{1,2}-q_{\beta, h, n})^2\big)
=
\nu\big( (R_{1,2}-\nu(R_{1,2}))^2 \big) 
=
\nu(R_{1,2}^2)
-
q_{\beta, h,n}^2
\nto 0\, .  
\end{align}
%
%To complete the proof, note that 
Note that by \eqref{ehn} and by Remark \ref{hnlmm}, 
$
q_{\beta, h}\coloneqq\lim_{n\ra\infty} q_{\beta, h,n}=1-\frac{1}{h}\fpar{p}{h}(\beta,h)
$
exists.
Therefore, taking $n\ra \infty$ in the inequality
\[
\nu\big((R_{1,2}-q_{\beta, h})^2\big)
\leqslant 
2\nu\big((R_{1,2}-q_{\beta, h,n})^2\big)
+
2(q_{\beta, h,n}-q_{\beta, h})^2
\]
and using \eqref{lim-1}, the proof of Theorem \ref{rsbthm} follows. \qed
\section*{Appendix}
The proof of the next result appears in Chen (2019) \cite{chen2019}, Proposition 6.1. 
%For convenience of the reader we present the proof in detail.
%
\begin{proposition}\label{prop-cheng}
	Let $Y$ be a real-valued random variable with zero-mean and finite-variance $\sigma^2$,
	with $\sigma>0$. For any 
	function $f:\mathbb{R}\to\mathbb{R}$
	with a bounded continuous third-order derivative, we have 
	\begin{align}\label{first-id}
	\mathbb{E}Yf(Y)
	&=
	\sigma^2\, \mathbb{E}f'(Y)
	+
	\boldsymbol{\gamma}^2_{Y}(f)\,,
	%\\[0,2cm]
	%\mathbb{E}\big(Yf(Y)\big)
	%&=
	%\mathbb{E}\big(f'(Y)\big)+ \boldsymbol{\gamma}^2_{Y}\big({\partial^2_u f; 1}\big)
	%+
	%(\sigma^2-1)f'(0)
	%\,, \label{first-id-1}
	\end{align}
	where
	\begin{align*}
	\boldsymbol{\gamma}^2_{Y}(f)
	\coloneqq
	\textstyle
	\mathbb{E}\Big(
	Y\int_0^{Y}(Y-u) f''(u) \,{\rm d}u\Big)
	-
	\sigma^2\,
	\mathbb{E}\Big(
	\int_0^{Y}(Y-u) f'''(u) \,{\rm d}u\Big)\,,
	\end{align*}
	with
	\begin{align}\label{second-id}
	\textstyle
	\Big| Y\int_{0}^{Y}(Y-u) f''(u) \,{\rm d}u \Big|
	\leqslant
	|Y|
	\int_0^{|Y|}\min\left\{2\,\|f'\|_\infty, 
	\|f''\|_\infty \,u\right\}
	\,{\rm d}u\,.
	\end{align}
\end{proposition}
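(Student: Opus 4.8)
The plan is to obtain \eqref{first-id} directly from Taylor's theorem with the integral form of the remainder, applied once to $f$ and once to $f'$, and then to read off the bound \eqref{second-id} by rewriting the second-order remainder of $f$ as an iterated integral. First I would expand $f$ about the origin,
\[
f(y)=f(0)+f'(0)\,y+\int_{0}^{y}(y-u)\,f''(u)\,\mathrm{d}u,\qquad y\in\mathbb{R},
\]
multiply by $y$, set $y=Y$, and take expectations; using $\mathbb{E}Y=0$ and $\mathbb{E}Y^{2}=\sigma^{2}$ this gives $\mathbb{E}\,Yf(Y)=\sigma^{2}f'(0)+\mathbb{E}\bigl(Y\int_{0}^{Y}(Y-u)f''(u)\,\mathrm{d}u\bigr)$. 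Next, since $f'''$ is bounded and continuous, the same second-order expansion applies to $f'$,
\[
f'(y)=f'(0)+f''(0)\,y+\int_{0}^{y}(y-u)\,f'''(u)\,\mathrm{d}u,
\]
and taking expectations (again with $\mathbb{E}Y=0$) yields $\sigma^{2}\mathbb{E}f'(Y)=\sigma^{2}f'(0)+\sigma^{2}\mathbb{E}\bigl(\int_{0}^{Y}(Y-u)f'''(u)\,\mathrm{d}u\bigr)$. Subtracting the second identity from the first eliminates the term $\sigma^{2}f'(0)$ and leaves precisely $\mathbb{E}\,Yf(Y)-\sigma^{2}\mathbb{E}f'(Y)=\boldsymbol{\gamma}^{2}_{Y}(f)$, which is \eqref{first-id}.

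For the estimate \eqref{second-id}, I would note that for $y\geqslant 0$ Fubini's theorem gives $\int_{0}^{y}(y-u)f''(u)\,\mathrm{d}u=\int_{0}^{y}\!\int_{0}^{v}f''(u)\,\mathrm{d}u\,\mathrm{d}v=\int_{0}^{y}\bigl(f'(v)-f'(0)\bigr)\,\mathrm{d}v$, while $\lvert f'(v)-f'(0)\rvert=\bigl\lvert\int_{0}^{v}f''(w)\,\mathrm{d}w\bigr\rvert\leqslant\min\{2\|f'\|_{\infty},\,\|f''\|_{\infty}\,v\}$; hence $\bigl\lvert\int_{0}^{y}(y-u)f''(u)\,\mathrm{d}u\bigr\rvert\leqslant\int_{0}^{y}\min\{2\|f'\|_{\infty},\|f''\|_{\infty}u\}\,\mathrm{d}u$. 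The change of variables $u\mapsto -u$ shows that the same bound holds for $y<0$ with $\lvert y\rvert$ in place of $y$. Multiplying by $\lvert Y\rvert$ and substituting $y=Y$ gives \eqref{second-id}.

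The \emph{only} genuinely delicate point is integrability: to pass expectations through the identities above one must know that $\mathbb{E}\lvert Yf(Y)\rvert$, $\mathbb{E}\lvert f'(Y)\rvert$ and the two expectations defining $\boldsymbol{\gamma}^{2}_{Y}(f)$ are all finite. This is where the boundedness of $f$ and of its lower-order derivatives (as in the uses of this proposition in the body of the paper), together with $\mathbb{E}Y^{2}=\sigma^{2}<\infty$, is invoked: by the bound just proved the remainder $\int_{0}^{Y}(Y-u)f''(u)\,\mathrm{d}u$ is $O(\lvert Y\rvert)$, so $Y$ times it is $O(Y^{2})$, and $\int_{0}^{Y}(Y-u)f'''(u)\,\mathrm{d}u$ is $O(Y^{2})$ as well, so every term is integrable and the termwise manipulation is legitimate. (In the application to the RFIM the random variable plays the role of $g_{x}=h_{x}\zeta_{x}$, whose fifth moment is assumed finite, so there is ample room.) Beyond this bookkeeping and the sign analysis in the remainder bound, the proof is entirely elementary — it is just Taylor's theorem used twice.
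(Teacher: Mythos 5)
Your proof is correct: Taylor's theorem with the integral form of the remainder applied to $f$ and to $f'$, followed by subtraction and the Fubini rewriting of the second-order remainder, yields \eqref{first-id} and \eqref{second-id} exactly, and your caveat about integrability is the right one to flag (in the paper's applications the relevant derivatives are all bounded, so every term is finite). The paper does not prove this univariate statement itself --- it cites Chen (2019), Proposition 6.1 --- but your argument is essentially the same Taylor-with-integral-remainder computation that the paper carries out in the Appendix for the bivariate generalization, Proposition \ref{prop-cv}, so the approach matches and nothing further is needed.
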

%\begin{proof}
%%For simplicity we will only present the proof of \eqref{first-id} since the proof of 
%%\eqref{first-id-1} is done analogously.
%Using Taylor's Theorem,
%\begin{multline*}
%Yf(Y)
%=
%Yf(0)+Y^2 f'(0)
%+
%Y \int_{0}^{Y}(Y-u)f''(u)\, \text{d}u
%\\[0,1cm]
%+
%\sigma^2
%\Big(
%f'(Y)-f'(0)-Yf''(0)
%%-
%%Y\int_{0}^{Y}f'''(u)\, \text{d}u +
%+R(Y)
%\Big)
%\,,
%\end{multline*}
%where
%$
%R(Y)
%\coloneqq
%-f'(Y)+f'(0)+Yf''(0).
%%+Y\int_{0}^{Y}f'''(u)\, \text{d}u.
%$
%Integration by parts gives
%%Using the Fundamental Theorem of Calculus and integration by parts, 
%%$R(Y)$ can be rewritten as
%\begin{align}\label{id-f}
%R(Y) 
%=
%- \int_{0}^{Y}f''(u)\, \text{d}u +
%Yf''(0)
%%+Y\int_{0}^{Y}f'''(u)\, \text{d}u
%%&
%%=
%%-Yf''(Y)
%%+
%%\int_{0}^{Y}u\, f'''(u)\, \text{d}u +  Yf''(0)
%%%+Y\int_{0}^{Y}f'''(u)\, \text{d}u
%%\nonumber
%%\\[0,2cm]
%%&
%=
%-\int_{0}^{Y}(Y-u)\, f'''(u)\, \text{d}u
%%+Y\int_{0}^{Y}f'''(u)\, \text{d}u
%\,.
%\end{align}
%%
%Since $\mathbb{E}Y=0$ and $\mathbb{E}Y^2=\sigma^2$,
%\begin{align*}
%\mathbb{E}\big(Yf(Y)\big)
%=
%\sigma^2\, \mathbb{E}\big(f'(Y)\big)
%+
%\mathbb{E}\biggl(
%Y\int_0^{Y}(Y-u) f''(u) \,{\rm d}u\biggr)
%+
%\sigma^2\,
%\mathbb{E} R(Y) 
%\,.
%\end{align*}
%Thus, using \eqref{id-f} the proof  of \eqref{first-id} follows.
%%
%On the other hand,  \eqref{second-id} follows by combining the equality
%$
%\int_{0}^{Y}(Y-u)f''(u)\, \text{d}u
%=
%\int_{0}^{Y}\big(f'(u)-f'(0)\big)\, \text{d}u
%$
%with the Mean-Value Theorem.
%%
%\end{proof}
%
%
The next result is new and can be seen as a generalization of Proposition \ref{prop-cheng} for the bivariate case. 
In order to lighten the notation we will write
%From now on, the expression 
$\partial_{i,j} f$ 
to denote
%denotes 
the partial derivative of order 
$i$ and $j$ for the first and second component respectively. 
\begin{proposition}[A generalized Gaussian integration by parts]\label{prop-cv}
	Let $X$ and $Y$ be two independent real-valued random variables with zero-mean and finite-variances $\sigma_X^2$ and $\sigma_Y^2$ respectively,
	with $\sigma_X$, $\sigma_Y$ both positive. For any 
	function $f:\mathbb{R}^2 \to\mathbb{R}$
	with a bounded continuous fifth-order derivative, we have
	\begin{align}\label{first-id-a}
	\mathbb{E}XYf(X,Y)
	=
	\sigma_X^2 \sigma_Y^2\, \mathbb{E}\partial_{1,1} f(X,Y)
	+
	\boldsymbol{\gamma}^2_{X,Y}(f)
	\,,
	\end{align}
	where 
	\begin{align*}
	\textstyle
	\boldsymbol{\gamma}^2_{X,Y}(f)
	&\coloneqq
	\textstyle
	\mathbb{E}\Big(
	XY\int_0^{X} \int_0^{Y} (X-u)\, \partial_{2,1}f(u,v) \,{\rm d}u {\rm d}v \Big)
	\\[0,1cm]
	&\quad -
	\textstyle
	\sigma_X^2 \sigma_Y^2  \,
	\mathbb{E}\Big(
	\int_0^{X} \int_0^{Y} (X-u)\, \partial_{3,2}f(u,v) \,{\rm d}u {\rm d}v
	\Big)
	\\[0,1cm]
	&\quad - 
	\textstyle
	\sigma_X^2 \sigma_Y^2\,
	\mathbb{E}\Big(
	\int_{0}^{X}(X-u)\partial_{3,1}f(u,0) \,\text{d}u
	+
	\int_{0}^{Y}(Y-v)\partial_{1,3}f(0,v) \,\text{d}v
	\Big)\,.
	\end{align*}
	Furthermore,
	\begin{align}\label{second-id-1}
	& \textstyle
	\Big| 
	XY\int_0^{X} \int_0^{Y} (X- u)\, \partial_{2,1}f(u,v) \,{\rm d}u {\rm d}v
	\Big|
	\\[0,1cm]
	& \hspace*{2cm}\leqslant
	\textstyle
	|X||Y|
	\int_0^{|X|} \int_0^{|Y|} \min\big\{2\,\|\partial_{1,1} f\|_\infty\, , 
	\|\partial_{2,1} f\|_\infty \,u\big\}
	\,{\rm d}u {\rm d}v\,,
	\nonumber
	\\[0,1cm]
	& \textstyle
	\Big| 
	\int_{0}^{X}(X-u)\,\partial_{3,1}f(u,0) \,\text{d}u
	\Big|
	\leqslant
	\int_0^{|X|} \min\big\{2\,\|\partial_{2,1} f\|_\infty\, , 
	\|\partial_{3,1} f\|_\infty \,u\big\}
	\,{\rm d}u \,,  \label{second-id-2}
	\\[0,1cm]
	& \textstyle
	\Big| 
	\int_{0}^{Y}(Y-v)\,\partial_{1,3}f(0,v) \,\text{d}v
	\Big|
	\leqslant
	\int_0^{|Y|} \min\big\{2\,\|\partial_{1,2} f\|_\infty\, , 
	\|\partial_{1,3} f\|_\infty \,v\big\}
	\,{\rm d}v \,. \label{second-id-3}
	\end{align}
\end{proposition}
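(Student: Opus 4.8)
\emph{Proof proposal.} The plan is to obtain \eqref{first-id-a} by peeling off the two variables one at a time — Taylor's theorem with integral remainder in $X$, then Proposition \ref{prop-cheng} in $Y$ — and then to reassemble the remainders so produced into the four terms of $\boldsymbol{\gamma}^2_{X,Y}(f)$, using the fundamental theorem of calculus, Fubini's theorem, the vanishing first moments $\mathbb{E}X=\mathbb{E}Y=0$, and the independence of $X$ and $Y$. The role of independence is that, whenever an odd power of one variable multiplies a function of the other, conditioning factors the expectation and the term drops.

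First I would fix $Y$, apply Taylor's theorem with integral remainder to $x\mapsto f(x,Y)$ at the origin, multiply by $X$ and integrate over $X$, using $\mathbb{E}X=0$ and $\mathbb{E}X^2=\sigma_X^2$, to get
\[
\mathbb{E}_X\big[X f(X,Y)\big]=\sigma_X^2\,\partial_{1,0}f(0,Y)+\mathbb{E}_X\Big[X\!\int_0^X(X-u)\,\partial_{2,0}f(u,Y)\,\mathrm{d}u\Big].
\]
Multiplying by $Y$ and averaging over $Y$, the second summand produces the first term of $\boldsymbol{\gamma}^2_{X,Y}(f)$: writing $\partial_{2,0}f(u,Y)=\partial_{2,0}f(u,0)+\int_0^Y\partial_{2,1}f(u,v)\,\mathrm{d}v$, the $\partial_{2,0}f(u,0)$–part is a function of $X$ times $Y$ and has zero mean by independence, while Fubini turns the rest into $\mathbb{E}\big(XY\!\int_0^X\!\int_0^Y(X-u)\,\partial_{2,1}f(u,v)\,\mathrm{d}u\,\mathrm{d}v\big)$. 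There remains $\sigma_X^2\,\mathbb{E}_Y\!\big[Y\,\partial_{1,0}f(0,Y)\big]$, which I would treat by applying Proposition \ref{prop-cheng} in $Y$ to $g(y):=\partial_{1,0}f(0,y)$ — legitimate because $\partial_{1,3}f$ is bounded, so $g$ has a bounded continuous third derivative — obtaining $\sigma_X^2\sigma_Y^2\,\mathbb{E}_Y[\partial_{1,1}f(0,Y)]$ together with $\sigma_X^2\boldsymbol{\gamma}^2_Y(g)$, whose $\partial_{1,3}f$–piece is exactly $-\sigma_X^2\sigma_Y^2\,\mathbb{E}_Y\!\big[\int_0^Y(Y-v)\partial_{1,3}f(0,v)\,\mathrm{d}v\big]$, the last term of $\boldsymbol{\gamma}^2_{X,Y}(f)$.

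To finish the identity I would replace $\mathbb{E}_Y[\partial_{1,1}f(0,Y)]$ by $\mathbb{E}\,\partial_{1,1}f(X,Y)$: Taylor–expanding $\partial_{1,1}f$ in the first variable to second order, from $(0,Y)$ to $(X,Y)$, and averaging, the linear term disappears since $\mathbb{E}\big[X\,\partial_{2,1}f(0,Y)\big]=\mathbb{E}X\cdot\mathbb{E}\,\partial_{2,1}f(0,Y)=0$, leaving
\[
\sigma_X^2\sigma_Y^2\,\mathbb{E}_Y[\partial_{1,1}f(0,Y)]=\sigma_X^2\sigma_Y^2\,\mathbb{E}\,\partial_{1,1}f(X,Y)-\sigma_X^2\sigma_Y^2\,\mathbb{E}\!\Big[\!\int_0^X(X-u)\,\partial_{3,1}f(u,Y)\,\mathrm{d}u\Big],
\]
and the last integral splits, via $\partial_{3,1}f(u,Y)=\partial_{3,1}f(u,0)+\int_0^Y\partial_{3,2}f(u,v)\,\mathrm{d}v$, into the two remaining terms of $\boldsymbol{\gamma}^2_{X,Y}(f)$; a careful accounting of the surviving Taylor remainders (with the help of Fubini) then identifies the right–hand side with $\boldsymbol{\gamma}^2_{X,Y}(f)$. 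All interchanges of integration and expectation are licit because every partial derivative of $f$ of total order at most five is bounded and $\mathbb{E}|X|,\mathbb{E}|Y|<\infty$. The delicate structural point is that $f$ must never be differentiated more than five times: this is exactly why Proposition \ref{prop-cheng} is applied to $\partial_{1,0}f(0,\cdot)$ — whose third derivative is the available $\partial_{1,3}f$ — and not to a function whose third $y$–derivative would be the unavailable $\partial_{3,3}f$; correspondingly, only the bare $C^2$ Taylor step (not the full identity) is used at the $X$–stage.

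Finally, the remainder bounds \eqref{second-id-1}–\eqref{second-id-3} follow exactly as \eqref{second-id} did in one variable. Integrating by parts, $\int_0^Z(Z-w)\,h''(w)\,\mathrm{d}w=\int_0^Z\big(h'(w)-h'(0)\big)\,\mathrm{d}w$, and $|h'(w)-h'(0)|\le\min\{2\|h'\|_\infty,\|h''\|_\infty|w|\}$. Taking $h(w)=\partial_{0,1}f(w,v)$, so that $h'=\partial_{1,1}f(\cdot,v)$ and $h''=\partial_{2,1}f(\cdot,v)$, bounds $\big|\int_0^X(X-u)\,\partial_{2,1}f(u,v)\,\mathrm{d}u\big|$ by $\int_0^{|X|}\min\{2\|\partial_{1,1}f\|_\infty,\|\partial_{2,1}f\|_\infty u\}\,\mathrm{d}u$ uniformly in $v$; integrating in $v$ and multiplying by $|X||Y|$ gives \eqref{second-id-1}, while the choices $h(w)=\partial_{1,1}f(w,0)$ and $h(w)=\partial_{1,1}f(0,w)$ give \eqref{second-id-2} and \eqref{second-id-3} respectively. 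The main obstacle I expect is precisely the bookkeeping in the first part — keeping track of which Taylor remainders survive the successive expectations and reassembling them, together with the Fubini rearrangements, into the compact four–term form of $\boldsymbol{\gamma}^2_{X,Y}(f)$ while respecting the fifth–order ceiling on the derivatives of $f$.
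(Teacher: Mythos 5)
Your overall strategy is the same as the paper's --- Taylor's theorem with integral remainder combined with independence and the vanishing first moments, with the bounds \eqref{second-id-1}--\eqref{second-id-3} obtained exactly as in the one-variable case --- merely organized sequentially (a bare Taylor step in $X$, then Proposition \ref{prop-cheng} in $Y$) instead of through a single bivariate expansion. The problem is that the ``careful accounting'' you defer to the end is precisely where the argument does not close. Carrying out your own steps, the application of Proposition \ref{prop-cheng} to $g(y)\coloneqq\partial_{1,0}f(0,y)$ produces, besides the two pieces you keep, the first summand of $\boldsymbol{\gamma}^2_Y(g)$, namely
\[
\sigma_X^2\,\mathbb{E}\Big(Y\int_0^Y(Y-v)\,\partial_{1,2}f(0,v)\,\mathrm{d}v\Big)\,,
\]
and this term is not among the four terms of the stated $\boldsymbol{\gamma}^2_{X,Y}(f)$, nor can it be rewritten in terms of them. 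It does not vanish in general: already for $f(x,y)=xy^3/6$ (all fifth-order derivatives vanish identically) it equals $\sigma_X^2\,\mathbb{E}Y^4/6$, which is exactly the amount by which the two sides of \eqref{first-id-a} then differ, since in that case $\sigma_X^2\sigma_Y^2\,\mathbb{E}\partial_{1,1}f(X,Y)+\boldsymbol{\gamma}^2_{X,Y}(f)=0$ while $\mathbb{E}XYf(X,Y)=\sigma_X^2\,\mathbb{E}Y^4/6$. So the identity cannot be completed as you assert: either this term must be adjoined to the definition of $\boldsymbol{\gamma}^2_{X,Y}(f)$, or it must be shown to vanish, and it does not.

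To be fair, the same term appears in the paper's own proof, as $X^2Y\int_0^Y(Y-v)\partial_{1,2}f(0,v)\,\mathrm{d}v$ in the bivariate Taylor expansion, and is likewise dropped without comment when expectations are taken; your variable-by-variable derivation actually makes the omission visible rather than hiding it inside a long display. The repair is harmless for the rest of the paper: the extra term is controlled, via \eqref{second-id}, by $\sigma_X^2$ times a quantity of order $\mathbb{E}|Y|^3\,\|\partial_{1,2}f\|_\infty$, which is of the same size as the contributions already estimated in Lemmas \ref{mainlmm-il}, \ref{lemma-main} and \ref{ggid}. But as a proof of the proposition as stated, your proposal has a genuine gap at the reassembly step: the claim that the surviving remainders ``identify the right-hand side with $\boldsymbol{\gamma}^2_{X,Y}(f)$'' is asserted rather than verified, and with the given four-term definition it is false.
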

\begin{proof}
	Taylor's Theorem for multivariate functions gives,
	\begin{align*}
	& XYf(X,Y)
	\\[0,1cm]
	&= \textstyle
	XYf(0,0)
	+
	X^2Y \partial_{1,0}f(0,0)
	+
	X Y^2 \partial_{0,1}f(0,0)
	\\[0,1cm]
	&\quad + \textstyle
	{X^3Y\over 2}\, \partial_{2,0}f(0,0)
	+
	{XY^3\over 2}\, \partial_{0,2}f(0,0)
	+
	X^2Y^2 \partial_{1,1}f(0,0)
	\\[0,1cm]
	&\quad + \textstyle
	XY \int_{0}^{X} {(X-u)^2\over 2}\, \partial_{3,0}f(u,0) \,\text{d}u
	+
	XY \int_{0}^{Y} {(Y-v)^2\over 2}\, \partial_{0,3}f(0,v) \,\text{d}v
	\\[0,1cm]
	&\quad + \textstyle
	X^2 Y \! \int_{0}^{Y} (Y-v)\, \partial_{1,2}f(0,v) \,\text{d}v
	+
	XY \!\int_{0}^{X}\!\!\int_{0}^{Y} (X-u)\, \partial_{2,1}f(u,v) \,\text{d}u \text{d}v
	\\[0,1cm]
	&\quad + \textstyle
	\sigma_X^2 \sigma_Y^2
	\Big(
	\partial_{1,1}f(X,Y) - \partial_{1,1}f(0,0) - X \partial_{2,1}f(0,Y) -Y\partial_{1,2}f(0,0)
	\\[0,1cm]
	&
	\quad  \textstyle
	%\hspace*{5.5cm}
	- \int_{0}^{X}(X-u)\,\partial_{3,1}f(u,0) \,\text{d}u
	- \int_{0}^{Y}(Y-v)\,\partial_{1,3}f(0,v) \,\text{d}v
	+
	R(X,Y)
	\Big)
	\,,
	\end{align*}
	where
	\begin{align*}
	R(X,Y)
	&\coloneqq \textstyle
	-\partial_{1,1}f(X,Y) + \partial_{1,1}f(0,0) 
	+ X \partial_{2,1}f(0,Y) +Y\partial_{1,2}f(0,0)
	\\[0,1cm]
	&\quad + \textstyle
	\int_{0}^{X}(X-u)\,\partial_{3,1}f(u,0) \,\text{d}u
	+ \int_{0}^{Y}(Y-v)\,\partial_{1,3}f(0,v) \,\text{d}v.
	\end{align*}
	A simple observation shows that
	\begin{align}\label{id-w}
	R(X,Y)
	&= \textstyle
	\int_{0}^{X}(X-u)\big(\partial_{3,1}f(u,Y)-\partial_{3,1}f(u,0)\big) \,{\rm d}u
	\nonumber
	\\[0,1cm]
	&= \textstyle
	-\int_0^{X} \int_0^{Y} (X-u)\, \partial_{3,2}f(u,v) \,{\rm d}u {\rm d}v\,.
	\end{align}
	Since $X$, $Y$ are independent random variables, $\mathbb{E}X=\mathbb{E}Y=0$ and $\mathbb{E}X^2=\sigma_X^2$, $\mathbb{E}Y^2=\sigma_Y^2$, one finds that
	\begin{align*}
	&\mathbb{E}XYf(X,Y)
	\\[0,1cm]
	&= \textstyle
	\sigma_X^2 \sigma_Y^2\, \mathbb{E}\partial_{1,1} f(X,Y)
	+
	\sigma_X^2 \sigma_Y^2\, \mathbb{E}R(X,Y)
	\\[0,1cm]
	&\quad \textstyle
	+
	\mathbb{E}\Big(
	XY\int_0^{X} \int_0^{Y} (X-u)\, \partial_{2,1}f(u,v) \,{\rm d}u {\rm d}v \Big)
	\\[0,1cm]
	& \quad -  \textstyle
	\sigma_X^2 \sigma_Y^2\,
	\mathbb{E}\Big(
	\int_{0}^{X}(X-u)\,\partial_{3,1}f(u,0) \,\text{d}u
	+
	\int_{0}^{Y}(Y-v)\,\partial_{1,3}f(0,v) \,\text{d}v
	\Big)
	\,.
	\end{align*}
	Then, using \eqref{id-w} the proof of \eqref{first-id-a} follows.
	On the other hand, the Items \eqref{second-id-1}, \eqref{second-id-2} 
	and \eqref{second-id-3} follow by combining each of the following identities
	\begin{align*}
	\textstyle
	\int_0^{X} \int_0^{Y} (X-u)\, \partial_{2,1}f(u,v) \,{\rm d}u {\rm d}v
	&= \textstyle
	\int_0^{X} \int_0^{Y} \big(\partial_{1,1}f(u,v)-\partial_{1,1}f(0,v)\big)  {\rm d}u {\rm d}v\,,
	\\[0,1cm] \textstyle
	\int_{0}^{X}(X-u)\,\partial_{3,1}f(u,0) \,\text{d}u
	&= \textstyle
	\int_0^{X} \big(\partial_{2,1}f(u,0)-\partial_{2,1}f(0,0)\big)  {\rm d}u\,,
	\\[0,1cm] \textstyle
	\int_{0}^{Y}(Y-v)\,\partial_{1,3}f(0,v) \,\text{d}v
	&= \textstyle
	\int_0^{Y} \big(\partial_{1,2}f(0,v)-\partial_{1,2}f(0,0)\big)  {\rm d}v\,,
	\end{align*}
	with the Mean-Value Theorem.

\end{proof}

\section*{Acknowledgements}
It is a pleasure to thank
R. Bissacot,
S. Chatterjee, 
L. Cioletti,
and
L. R. Fontes
for fruitful discussions, questions, references and
helpful suggestions personally and by email, on earlier versions of this manuscript. 
This study was financed in part by the Coordena\c{c}\~{a}o de Aperfei\c{c}oamento de Pessoal de N\'{i}vel Superior - Brasil (CAPES) - Finance Code 001.
Jamer Roldan was supported by CNPq.

\end{document}